\documentclass[a4paper, 11pt]{amsart}
\usepackage{a4wide}
\usepackage{latexsym, fullpage}
\usepackage[left=2cm,top=3cm,right=2cm]{geometry}
\usepackage[utf8]{inputenc}
\usepackage{amsmath}
\usepackage{amsfonts}
\usepackage{amsthm}
\usepackage{amssymb}
\usepackage{amscd}
\usepackage[all]{xy}
\usepackage{fancyhdr}
\usepackage{enumerate}
\usepackage{accents}
\usepackage{setspace}
\usepackage{xcolor}
\usepackage{upgreek}
\usepackage{verbatim}
\usepackage[mathscr]{eucal}
\usepackage{mathrsfs}
\usepackage[numbers]{natbib}
\usepackage[fit]{truncate}
\usepackage{graphicx}
\usepackage{epsfig}
\usepackage{soul}

\usepackage{tabularx,ragged2e,booktabs,caption}
\newcolumntype{C}[1]{>{\Centering}m{#1}}

\numberwithin{equation}{section}

\theoremstyle{plain}
\newtheorem{theorem}{Theorem}[section]
\newtheorem{corollary}[theorem]{Corollary}
\newtheorem{lemma}[theorem]{Lemma}
\newtheorem{proposition}[theorem]{Proposition}

\newtheorem{remark}[theorem]{Remark}
\newtheorem{assumption}{Assumption}

\theoremstyle{definition}
\newtheorem{definition}[theorem]{Definition}

\theoremstyle{remark}


\newcommand{\R}{\mathbf{R}}

\newcommand{\N}{\mathbf{N}}

\newcommand{\Lip}{\operatorname{Lip}}

\newcommand{\argmax}{\operatornamewithlimits{argmax}}
\newcommand{\norm}[1]{\left\lVert#1\right\rVert}

\makeatletter
\def\munderbar#1{\underline{\sbox\tw@{$#1$}\dp\tw@\z@\box\tw@}}
\makeatother

\allowdisplaybreaks

\title{Existence in Multidimensional Screening with General Nonlinear Preferences}

\thanks{
	$^*$ This paper is based on Chapter 3 of the author's thesis~\cite{Zhang18}.
	The author would like to express his deepest gratitude to his Ph.D. advisor Robert J. McCann for leading him to this project, and for his guidance and inspiration throughout. The author is grateful to Xianwen Shi, Alfred Galichon, Guillaume Carlier and  Ivar Ekeland  for stimulating conversations and encouragement,
	as well as to Georg N\" oldeke and Larry Samuelson for sharing their work in preprint form and vital remarks. 
	This project was initiated during the Fall of 2013 when the author was in residence at the Mathematical Sciences Research Institute in Berkeley CA, under a program supported by National Science Foundation Grant No. 0932078 000, and progressed during the Fall 2014 program of the Fields Institute for the Mathematical Sciences.
	\copyright \today}

\author{Kelvin Shuangjian Zhang$^\dagger$}\thanks{$^\dagger$Department of Mathematics, University of Toronto, Toronto, Ontario, Canada, M5S 2E4 {\tt szhang@math.toronto.edu}}

\begin{document}

\begin{abstract}
	We generalize the approach of Carlier (2001) and provide an existence proof for the multidimensional screening problem with general nonlinear preferences. We first formulate the principal's problem
	as a maximization problem with $G$-convexity constraints and then use $G$-convex analysis to prove existence. \medskip 

	{\it Keywords:} Principal-agent problem; Adverse selection; Bi-level optimization; Incentive-compatibility; Non-quasilinearity
\end{abstract}

\bigskip

\maketitle

\section{Introduction}\label{section:introduction}
	This paper provides a general existence for a multidimensional nonlinear pricing model,  which is a natural extension of the models studied by Mussa-Rosen \cite{MussaRosen78}, Spence \cite{Spence74, Spence80}, Myerson \cite{Myerson81}, Baron-Myerson \cite{BaronMyerson82}, Maskin-Riley \cite{MaskinRiley84}, Wilson \cite{Wilson93}, Rochet-Chon\'e \cite{RochetChone98}, Monteiro-Page \cite{MonteiroPage98} and  Carlier~\cite{Carlier01}. A significant distinction lies in whether the agents' private type is one-dimensional (such as \cite{MussaRosen78, MaskinRiley84}), or multidimensional (such as \cite{RochetChone98,MonteiroPage98, Carlier01}). Another distinction is whether preferences are quasilinear on price (such as \cite{Armstrong96, Carlier01}) or fully nonlinear (such as \cite{NoldekeSamuelson15p, McCannZhang17}), especially for multidimensional models.  
	\medskip
	
	This paper proves the existence of a (price menu) solution to a multidimensional multiproduct monopolist problem, by extending Carlier \cite{Carlier01} to fully nonlinear preferences. $G$-convex analysis,  which is strongly tied to Trudinger's theory on the regularity of nonlinear PDEs \cite{Trudinger14}  developed for vastly different purposes, is employed to deal with the difficulty of non-quasilinear preferences. This method is potentially applicable to other problems under the same principal-agent framework, such as the study of tax policy (\cite{Mirrlees71}) and other regulatory policies (\cite{BaronMyerson82}). \medskip

	Consider the problem for a multiproduct monopolist (the principal) who sells indivisible products to a population of consumers (agents), who each buy at most one unit. Assume the monopolist is able to produce enough of each product such that there are neither product supply shortages {nor economies of scale}. Taking into account participation constraints and incentive compatibility, the monopolist would like to find the optimal menu of prices to maximize her total profit.\medskip

	In this paper, we first identify incentive compatibility with a $G$-convexity constraint, before rewriting the maximization problem by converting the optimization variables from a product-price pair of mappings to a product-value pair. It can then be shown that the product-value pair converges under the $G$-convexity constraint. The existence result follows. \medskip

	Starting from Mirrlees \cite{Mirrlees71} and Spence \cite{Spence74}, there are two main types of generalizations. One generalization is regarding dimension, from one-dimensional to multi-dimensional. The other generalization is in the form of utility functions, to beyond quasilinear.\medskip

	For the quasilinear case, where the utility function depends linearly on price, theories of existence \cite{Basov05,RochetStole03,Carlier01,MonteiroPage98}, uniqueness \cite{CarlierLachand-Robert01,FigalliKimMcCann11,MussaRosen78,RochetChone98} and robustness \cite{Basov05,FigalliKimMcCann11} have been well studied, among which the equivalence of function space convexity to the non-negative cross-curvature condition revealed by Figalli-Kim-McCann \cite{FigalliKimMcCann11} serves as a major milestone orienting our work. \medskip

	When parameterization of preferences is linear in agent types and price, Rochet and Chon\'e (1998, \cite{RochetChone98}) not only obtain existence results but also partially characterize optimal solutions and expound their economic interpretations, given that the monopolist profit can be characterized by {the aggregate difference between selling prices and quadratic manufacturing costs.}\medskip
	
	More generally, Carlier \cite{Carlier01} has proved existence results for general quasilinear utilities, where agent type and product type are not necessarily of the same dimension and the monopolist profit equals selling price minus some linear manufacturing cost.\medskip

	This paper generalizes the quasilinear case to the non-quasilinear case, which has many potential applications. For example, some fully nonlinear utilities include scenarios where agents are more sensitive to higher prices and where different agents might have different sensitivities to the same price. See Wilson~\cite[Chapter 7]{Wilson93} for the importance of taking income effects into account. 
	The generalized existence problem also appears as a conjecture by Basov \cite[Chapter~8]{Basov05}. However, only a few results are known for the multidimensional non-quasilinear case, and the impact of price on utility could be much more complicated.\medskip

	Recently, N\"oldeke-Samuelson \cite{NoldekeSamuelson15p} provided a general existence result assuming that the agent and product space are compact, by implementing a duality argument based on Galois connections. McCann-Zhang \cite{McCannZhang17} not only showed a general existence result assuming the single-crossing type condition and boundedness of the agent-type and product-type spaces, but generalized uniqueness and convexity results of Figalli-Kim-McCann \cite{FigalliKimMcCann11} to the non-quasilinear case, by using $G$-convexity arguments. In this paper, we also explore existence using $G$-convex analysis, which will be introduced in Subsection \ref{subsection:preliminary}, but with less restriction on boundedness of the product domain and without assuming the generalized single-crossing condition~\cite{McAfeeMcMillan88}. As a result of the lack of natural compactness, the proof of the existence result in this paper is entirely different from that in either of the earlier papers. It should be mentioned here that the existence results from this paper and the earlier two require no restrictions on the monopolist profit to take on a particular form, which is a generalization from much of the literature.\medskip

	The remainder of this paper is organized as follows. Section \ref{section:model} states the mathematical model and assumptions. It also introduces preliminaries, including $G$-convexity and $G$-subdifferentiability, and reformulates the monopolist's problem. In Section \ref{section:mainresult}, we state the existence theorem (Theorem \ref{maintheorem}) as well as the convergence results for sequences of $G$-convex functions (Proposition \ref{proposition:convergence}). Section \ref{section:futurework} proposes some directions for future work. We leave all the proofs of lemmas, propositions and the existence theorem in Section \ref{section:proofs}.

\bigskip

\section{Model}\label{section:model}

	Our model of the principal-agent problem is a bilevel optimization. After a monopolist publishes her price menu, each agent maximizes his utility through the purchase of at most one product. Knowing only the distribution of agent types, the monopolist maximizes aggregate profit based on agents' choices, which are based on the price menus.\medskip

	Suppose the agents' preferences are given by some parametrized utility function  $G(x, y, z)$, where $x$ is an $M$-dimensional vector of agent characteristics, $y$ is an $N$-dimensional vector of attributes of each product, and $z$ represents the price of each product. Denote {by} $X$ the space of agent types, by $Y$ the space of products, by $cl(Y)$ the closure of $Y$, by $Z$ the space of prices, and by $cl(Z)$ the closure of $Z$. In this paper, we only consider the case where both agent types and product attributes are continuous. \medskip

	The monopolist sells indivisible products to agents, i.e., she will sell neither a part/percentage of one product nor a product with some probability. Each agent buys at most one unit of product. 
	For any given price menu $p: cl(Y) \rightarrow cl(Z)$, an agent $x \in X$ knows his utility $G(x,y,p(y))$ for purchasing each product $y$ at price $p(y)$. It follows that each agent solves the following maximization problem 
	\begin{equation}\label{eqn_optimal_product}
		u(x):=\max_{y \in cl(Y)} G(x, y, p(y)),
	\end{equation}
	where $u(x)$ represents the maximal utility agent $x$ can obtain, and $u: X \rightarrow \R$ is also called the value function or indirect utility function. At this point, it is assumed that the maximum in \eqref{eqn_optimal_product} is attained for each agent $x$.
	\medskip

	If agent $x$ purchases product $y$ at price $p(y)$, the monopolist would earn from this transaction a profit of $\pi(x,y,p(y))$. For example, the monopolist profit can take the form $\pi(x,y,p(y)) = p(y)-c(y)$, where $c(y)$ is a variable manufacturing cost function. Summing over all agents in the distribution $d\mu(x)$, the monopolist's total profit is characterized by 
	\begin{equation}\label{eqn_monopolist_integral}
		\Pi(p, y):=\int_{X} \pi(x, y(x), p(y(x))) d\mu(x),
	\end{equation}
	which depends on her price menu $p: cl(Y) \rightarrow cl(Z)$ and  agents' choices $y: X \rightarrow cl(Y)$.\footnote{It is worth mentioning that in some literature, the monopolist's objective is to design a product line $\tilde{Y}$ (i.e.,~a subset of $cl(Y)$) and a price menu $\tilde{p}: \tilde{Y} \rightarrow \R$ that jointly maximize the overall monopolist profit. Then, given $\tilde{Y}$ and $\tilde{p}$, an agent of type $x$ chooses the product $y(x)$ that solves
	\begin{equation*}
		\max_{y \in \tilde{Y}} G(x,y, \tilde{p}(y)):= u(x).
	\end{equation*}
	Allowing the price to take value $\bar{z}$ (which may be $+\infty$), and assuming Assumption \ref{assmp:Gregular} below, the effect of designing a product line $\tilde{Y}$ and price menu $\tilde{p}: \tilde{Y}\rightarrow \R$ is equivalent to that of designing a price menu $p : cl(Y)\rightarrow (-\infty, +\infty]$, which equals $\tilde{p}$ on $\tilde{Y}$ and maps $cl(Y) \setminus \tilde{Y}$ to $\bar{z}$, such that no agents choose to purchase any product from $cl(Y) \setminus \tilde{Y}$, which is less attractive than the outside option $y_{\emptyset}$ according to Assumption \ref{assmp:Gregular}. In this paper, we use the latter as the monopolist's objective.
 \vspace{0.1cm}
 
	For any given price menu $p: cl(Y)\rightarrow (-\infty, +\infty]$, one can construct a mapping $y: X \rightarrow cl(Y)$ such that each $y(x)$ solves the maximization problem in \eqref{eqn_optimal_product}. But such mapping is not necessarily unique, without the single-crossing type assumptions. Therefore, we adopt in \eqref{eqn_monopolist_integral} the total profit as a functional of both price menu~$p$ and its corresponding mapping $y$.}\medskip

\begin{remark}
		One can replace the constraint
		\begin{equation}\label{product-p}
			y(x)\in \argmax_{y\in cl(Y)} G(x.y, p(y)),
		\end{equation}
		which is derived from equation \eqref{eqn_optimal_product}, by the incentive compatibility defined below, which are equivalent in the following sense: (1) all product-price pair satisfying \eqref{product-p} is incentive compatible; (2) for any incentive compatible product-price pair, there exists an (equivalent)  incentive compatible product-price pair which has the same value under the functional $\Pi$ and satisfies \eqref{product-p}.
\end{remark}

\begin{definition}[Incentive compatibility]
	A (product, price) pair of measurable mappings $(y,z): X \rightarrow cl(Y) \times cl(Z)$ on agent space $X$ is \textit{incentive compatible} if and only if $G(x,y(x),z(x)) \ge G(x, y(x'), z(x'))$ for all $(x,x')\in X^2$.
\end{definition}

	An incentive-compatible product-price pair $(y, p(y))$ ensures that no agent has the incentive to pretend to be another agent type.\medskip 

	In addition, we adopt a participation constraint in order to rule out the possibility of the monopolist charging exorbitant prices and the agents still having to make transactions despite this: each agent $x\in X$ will refuse to participate to the market if the maximum utility he can obtain is less than his reservation value $u_{\emptyset}(x)$, where the function $u_{\emptyset}: X \rightarrow \R$ is given in the form $u_{\emptyset}(x): = G(x, y_{\emptyset}, z_{\emptyset})$ for some $(y_{\emptyset}, z_{\emptyset}) \in cl(Y \times Z)$, where $y_{\emptyset}$ represents the outside option, whose price equals to some fixed value $z_{\emptyset} \in \R$ beyond the monopolist's control. \medskip

	For the monopolist profit, some literature assumes $\pi(x, y_{\emptyset}, z_{\emptyset}) \ge  0$ for all $x\in X$ to ensure that the outside option is harmless to the monopolist. Here, it is not necessary to adopt such an assumption for the sake of generality. \medskip

	The monopolist's problem can be described as follows:

	\begin{equation}\label{origin_problem}
		(P_0)
		\begin{cases}
			\sup \Pi(p,y)=\int_{X} \pi(x, y(x), p(y(x)))~ d\mu(x)\\
			s.t.\\
			(y,p(y)) \text{~is incentive compatible
				};\\
			 G(x, y(x), p(y(x))) \ge u_{\emptyset}(x) \text{ for all } x \in X;\\
			 p \text{  is lower semicontinuous}.
		\end{cases}
	\end{equation}

	We assume that $p$ is lower semicontinuous, without which the maximum in \eqref{eqn_optimal_product} may not be attained. We also rewrite the monopolist's problem in Proposition \ref{equiv_form}, which is an equivalent form of \eqref{origin_problem}.
	\medskip

	The purpose of the following subsections is to fix terminology and prepare the preliminaries for the main results in the next section.
	The proofs can be found in Section \ref{section:proofs}.
	\medskip

\subsection{$G$-Convex and $G$-Subdifferentiability}\label{subsection:preliminary}

	In this subsection, we introduce some tools from convex analysis and the notion of $G$-convexity (c.f. \cite{Trudinger14,Balder77,Singer97}), which is a generalization of ordinary convexity. Several results in this subsection echo those from Section 3 of McCann-Zhang \cite{McCannZhang17} but extend to unbounded domains.
	\medskip

	The following assumptions are made to the agents' direct utility $G$. We use $C^0(X)$ to denote the space of all continuous functions on $X$ and use $C^1(X)$ to denote the space of all differentiable functions on $X$ whose derivatives are continuous.  \medskip
 
 \begin{assumption}\label{assmp:Gregular}
 	Agents' utility $G \in C^{1}(cl(X\times Y \times Z))$, where the space of agents $X$ is a bounded open convex subset in $\R^M$ with $C^1$ boundary, the space of products $Y \subset \R^N$, and range of prices $Z=(\munderbar z,\bar z)$ with $-\infty < \munderbar z < \bar z \le +\infty$. {Assume $G(x,y,\bar{z}) := \lim_{z\rightarrow \bar{z}} G(x,y,z) \le G(x, y_{\emptyset}, z_{\emptyset})$ for all $(x,y) \in X \times cl(Y)$; and assume this inequality is strict when $\bar{z} = +\infty$.}
 \end{assumption}
 
	Here we do not necessarily assume $X$, $Y$, and $Z$ are compact spaces; in particular, $Y$ and $Z$ are potentially unbounded (i.e.,\ we do not set  \textit{a priori} bounds for product attributes or an \textit{a priori} upper bound for prices). However, we do specify a lower bound for the price range, since the monopolist has no incentive to set price close to negative infinity. The last inequality shows the highest price $\bar{z}$ is less preferred than the outside option.
	\medskip
 
\begin{assumption}\label{assmp:Gdecreasing}
 	$G(x,y,z)$ is strictly decreasing with respect to $z$ for each $(x,y) \in cl(X \times Y)$.
\end{assumption}
 
	This is to say that, for any given product, the higher the price paid to the monopolist, the lower the utility that will be left for the agent. For each $(x, y) \in X\times cl(Y)$ and $u\in G(x,y, cl(Z))$, define $H(x,y,u) := z$ whenever $G(x,y,z) = u$, i.e., $H(x, y, \cdot)= G^{-1}(x,y,\cdot)$. Therefore, $H(x,y,u)$ represents the price paid by agent $x$ for product $y$ when receiving value $u$.
	\medskip
 
	Proposition \ref{lemma_continuity} shows that the inverse function of $G$ is also continuous because $G$ is continuous and monotonic on the price variable. 
	\medskip
 
\begin{proposition}\label{lemma_continuity}
 	Given Assumption \ref{assmp:Gregular} and Assumption \ref{assmp:Gdecreasing}, the function $H$ is continuous.
\end{proposition}

	Recall that the subdifferential of a function $u$ at $x_0$ is defined as the set:
	\begin{equation*}
		\partial u(x_0) = \left\{ y \in cl(Y)|~ u(x) - u(x_0) \ge \langle  x- x_0,  y \rangle \text{ for all } x \in X  \right\}.
	\end{equation*}
	Here we use $ \langle , \rangle$ to denote the Euclidean inner product. This set is nonempty for every $x_0$ if and only if $u$ is convex. Then for any convex function $u$ on $X$ and any fixed point $x_0 \in X$, there exists $y_0 \in \partial u(x_0)$, satisfying
	\begin{equation}\label{convex_function}
		u(x) \ge  \langle x , y_0\rangle -\left(\langle x_0, y_0\rangle -  u(x_0)\right)	\text{  for all $x \in X$},
	\end{equation} 
	where equality holds at $x = x_0$. On the other hand, if for any $x_0\in X$, there exists $y_0$ such that \eqref{convex_function} holds for all $x\in X$, then $u$ is convex. The following definition is analogous to this property, which is a special case of $G$-convexity, when $G(x,y,z) = \langle x, y \rangle -z$.

\begin{definition}[$G$-convexity]
	A function $u\in C^0(X)$ is called {\it $G$-convex} if for each $x_0 \in X$, there exist $y_0 \in   cl(Y)$ and $z_0 \in  cl(Z)$ such that $u(x_0)=G(x_0, y_0, z_0)$ and $u(x)\ge G(x, y_0, z_0)$ for all $x\in X$.
\end{definition}

	Similarly, one can also generalize the definition of subdifferential from \eqref{convex_function}.

\begin{definition}[$G$-subdifferentiability]
	The $G$-subdifferential of a function $u: X \rightarrow \R$ is a point-to-set mapping defined by
	\begin{equation*}
	\partial^G u(x):= \left\{ y\in  cl(Y)|~ H(x,y,u(x)) \text{ is defined and }u(x')\ge G(x',y, H(x,y,u(x))) \text{ for all } x'\in X \right\}.
	\end{equation*}
	A function $u$ is said to be {\it $G$-subdifferentiable} at $x$ if and only if $\partial^G u(x) \neq \emptyset$.\footnote{In Trudinger \cite{Trudinger14}, this point-to-set mapping $\partial^G u$ is also called $G$-$normal$ mapping; see this paper for more properties related to $G$-convexity.}
\end{definition}

	In particular, if $G(x,y,z) = \langle x, y \rangle - z$, then the $G$-subdifferential coincides with the subdifferential. There are other generalizations of convexity and subdifferentiability. For instance, $h$-convexity in Carlier \cite{Carlier01}, or equivalently, $b$-convexity in Figalli-Kim-McCann \cite{FigalliKimMcCann11}, or $c$-convexity in  Gangbo-McCann \cite{GangboMcCann96}, is a special form of $G$-convexity, where $G(x,y,z)=  h(x,y) -z$, which plays an important role in the quasilinear case. For more references of convexity generalizations, see Kutateladze-Rubinov \cite{KutateladzeRubinov72}, Elster-Nehse \cite{ElsterNehse74}, Balder \cite{Balder77}, Dolecki-Kurcyusz \cite{DoleckiKurcyusz78},  Singer \cite{Singer97},  Rubinov \cite{Rubinov00a}, and Martínez-Legaz \cite{MartinezLegaz05}.\medskip

	As mentioned above, a well-known result in convex analysis is that a function is convex if and only if it is subdifferentiable everywhere. The following lemma adapts this to $G$-convexity. 

\begin{lemma}\label{convex-subdiff}
	Given Assumption \ref{assmp:Gdecreasing}, a function $u: X \rightarrow \R$ is $G$-convex if and only if it is $G$-subdifferentiable everywhere.
\end{lemma}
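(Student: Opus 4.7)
The plan is to unpack both definitions and observe that Assumption 2 lets us pass freely between a price $z_0\in cl(Z)$ and a utility level $u(x_0)$ via the inverse $H(x_0,y_0,\cdot)=G^{-1}(x_0,y_0,\cdot)$. The proof is essentially a change of variable $z_0 \leftrightarrow H(x_0,y_0,u(x_0))$ in the inequality defining each notion, done in each direction.

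For the forward implication, I would fix an arbitrary $x_0\in X$ and apply $G$-convexity to produce $y_0\in cl(Y)$ and $z_0\in cl(Z)$ with $u(x_0)=G(x_0,y_0,z_0)$ and $u(x)\ge G(x,y_0,z_0)$ for every $x\in X$. By Assumption 2 the map $z\mapsto G(x_0,y_0,z)$ is strictly decreasing on $cl(Z)$, hence injective, and so the identity $u(x_0)=G(x_0,y_0,z_0)$ forces $z_0=H(x_0,y_0,u(x_0))$. Substituting this into the pointwise inequality gives exactly $u(x)\ge G(x,y_0,H(x_0,y_0,u(x_0)))$ for all $x\in X$, i.e.\ $y_0\in\partial^G u(x_0)$.

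For the converse, I would fix $x_0\in X$ and pick $y_0\in\partial^G u(x_0)$, which exists by hypothesis. Setting $z_0:=H(x_0,y_0,u(x_0))\in cl(Z)$ (well-defined because $u(x_0)$ lies in the range of $G(x_0,y_0,\cdot)$, as witnessed implicitly by the very definability of $\partial^G u(x_0)$), the relation $G(x_0,y_0,z_0)=u(x_0)$ holds by definition of $H$, while the $G$-subdifferential inequality reads $u(x)\ge G(x,y_0,z_0)$ for every $x\in X$. This is the defining property of $G$-convexity at $x_0$; since $x_0$ was arbitrary, $u$ is $G$-convex.

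There is really no main obstacle here beyond bookkeeping: the only substantive ingredient is Assumption 2, used to guarantee that $H$ is a genuine inverse of $G(x,y,\cdot)$ so that the two inequality formulations are interchangeable. The only mild subtlety is ensuring that whenever we write $H(x_0,y_0,u(x_0))$ in the backward direction we are within its domain of definition; this is automatic because $\partial^G u(x_0)$ was itself defined through this expression.
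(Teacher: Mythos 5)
Your proposal is correct and follows essentially the same route as the paper's proof: in each direction one passes between $z_0$ and $H(x_0,y_0,u(x_0))$ using Assumption 2 (which makes $G(x_0,y_0,\cdot)$ injective so that $H$ is a genuine inverse), and then the two defining inequalities coincide. The extra remarks on injectivity and on $H(x_0,y_0,u(x_0))$ being well-defined are sound and only make explicit what the paper leaves implicit.
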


	Using Lemma \ref{convex-subdiff}, one can show the following result, which connects incentive compatibility in the economic context with $G$-convexity and $G$-subdifferentiability in mathematical analysis, generalizing the results of Rochet \cite{Rochet87} and Carlier \cite{Carlier01}.

\begin{lemma}[$G$-convex utilities characterize incentive compatibility]\label{incen/convex}
	Let $(y,z)$ be a pair of mappings from $X$ to $cl(Y) \times cl(Z)$. Given Assumption \ref{assmp:Gdecreasing}, this (product, price) pair is incentive compatible if and only if $u(\cdot):=G(\cdot,y(\cdot),z(\cdot))$ is $G$-convex and $y(x)\in \partial^G u(x)$ for each $x \in X$.
\end{lemma}

\subsection{Implementability}\label{subsection:implementability}
We introduce implementability here, which is closely related to incentive-compatibility and can also be exhibited by $G$-convexity and $G$-subdifferential. 
	
\begin{definition}[Implementability]
		A function $y: X \rightarrow cl(Y)$ is called \textit{implementable} if and only if there exists a function $z: X \rightarrow \R$  such that the pair $(y, z)$ is incentive compatible.
\end{definition}
	
\begin{remark}\label{rmk:implementability}
		Allowing Assumption \ref{assmp:Gdecreasing}, a function $y$ is implementable if and only if there exists a price menu $p: cl(Y) \rightarrow \R$ such that the pair $(y, p(y))$ is incentive compatible.
\end{remark}
	
	As a corollary of Lemma \ref{incen/convex},  implementable functions can be characterized as $G$-subdifferential of $G$-convex functions. 
	
\begin{corollary}[$G$-convex utilities characterize implementability]\label{cor:implementable}
	Given Assumption \ref{assmp:Gdecreasing}, a function $y: X \rightarrow cl(Y)$ is implementable if and only if there exists a $G$-convex function $u(\cdot)$ such that $y(x) \in \partial^G u(x)$ for each $x\in X$.
\end{corollary}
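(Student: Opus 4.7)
The plan is to deduce this corollary directly from Proposition \ref{incen/convex} together with Remark \ref{rmk:implementability}, since implementability is designed to be precisely the ``$y$-component'' of incentive compatibility.

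For the forward direction, suppose $y$ is implementable. By definition there exists $z:X\rightarrow\R$ such that $(y,z)$ is incentive compatible. Applying Proposition \ref{incen/convex} to this pair, the function $u(\cdot):=G(\cdot,y(\cdot),z(\cdot))$ is $G$-convex and satisfies $y(x)\in\partial^G u(x)$ for every $x\in X$, which is the desired conclusion.

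For the reverse direction, suppose that $u:X\rightarrow\R$ is $G$-convex and $y(x)\in\partial^G u(x)$ for every $x\in X$. Define $z:X\rightarrow cl(Z)$ by $z(x):=H(x,y(x),u(x))$; this is well-defined thanks to Assumption 2, which ensures that $H(x,y,\cdot)=G^{-1}(x,y,\cdot)$ exists. By construction $G(x,y(x),z(x))=u(x)$, and the defining inequality for $y(x)\in\partial^G u(x)$ becomes $u(x')\ge G(x',y(x),z(x))$ for all $x'\in X$. Substituting the identity $u(x')=G(x',y(x'),z(x'))$ on the left yields $G(x',y(x'),z(x'))\ge G(x',y(x),z(x))$ for all $x,x'\in X$, which is exactly incentive compatibility of the pair $(y,z)$. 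Hence $y$ is implementable.

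There is no real obstacle here: the corollary is essentially a restatement of Proposition \ref{incen/convex} with the price variable quantified away, and the only point requiring a little care is invoking Assumption 2 to guarantee that $H$, and hence the price menu $z(\cdot)$, is well-defined from $(y,u)$. Once that is in place, both implications reduce to unpacking definitions.
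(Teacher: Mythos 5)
Your proof is correct and follows essentially the same route as the paper: the forward direction is a direct application of Proposition \ref{incen/convex}, and the reverse direction defines $z(\cdot):=H(\cdot,y(\cdot),u(\cdot))$ and recovers incentive compatibility (the paper simply cites Proposition \ref{incen/convex} for this step rather than re-unpacking the definitions, but the content is identical).
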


	When parameterization of preferences is linear in agent types and price, Corollary \ref{cor:implementable} says that a function is implementable if and only if it is monotone increasing. In general quasilinear cases, this coincides with Proposition 1 of Carlier \cite{Carlier01}. 
	\medskip

\subsection{Reformulation of the Monopolist's Problem}\label{subsection:reformulation}
From the original monopolist's problem \eqref{origin_problem}, we replace product-price pair $(p,y)$ by the value-product pair $(u,y)$, using $u(\cdot) = G(\cdot, y(\cdot), p(y(\cdot)))$. Combining this with Lemma \ref{incen/convex}, the incentive-compatibility constraint $(y,p(y))$ is equivalent to $G$-convexity of $u(\cdot)$ and $y(x) \in \partial^G u(x)$ for all $x\in X$. Therefore, one can rewrite the monopolist's problem as follows.

\begin{proposition}[Reformulation of the monopolist’s problem]\label{equiv_form}
	Given Assumptions \ref{assmp:Gregular} and \ref{assmp:Gdecreasing}, the monopolist's problem $(P_0)$ is equivalent to
	\begin{equation}\label{Principal_new_problem}
		(P)
		\begin{cases}
			\sup \tilde{\Pi}(u,y):=\int_{X} \pi(x, y(x), H(x,y(x), u(x))) d\mu(x)\\
			s.t.\\
			 $u$ \text{ is } $G$\text{-convex };\\
			 y(x) \in \partial^G u(x) \text{ and }u(x)\ge u_{\emptyset}(x) \text{ for all } x \in X.
		\end{cases}
	\end{equation}
\end{proposition}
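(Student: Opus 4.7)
The plan is to prove the equivalence of $(P_0)$ and $(P)$ in two directions, leveraging Proposition \ref{incen/convex} to translate between incentive-compatible product-price pairs and $G$-convex utility functions with the correct subdifferential structure. Throughout, Assumption 2 (strict monotonicity of $G$ in $z$) ensures that $H = G^{-1}(x, y, \cdot)$ is well defined and that the identity $p(y(x)) = H(x, y(x), u(x))$ holds unambiguously whenever $u(x) = G(x, y(x), p(y(x)))$.

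For the forward direction, I would start with a feasible pair $(p, y)$ for $(P_0)$ and define $u(x) := G(x, y(x), p(y(x)))$. Since $(y, p(y))$ is incentive compatible, Proposition \ref{incen/convex} applied to the pair $(y, p \circ y)$ yields that $u$ is $G$-convex and $y(x) \in \partial^G u(x)$ for every $x \in X$. The participation constraint $u(x) \geq u_{\emptyset}(x)$ translates verbatim. Finally, by Assumption 2 the relation $G(x, y(x), p(y(x))) = u(x)$ inverts to $p(y(x)) = H(x, y(x), u(x))$, so
\begin{equation*}
\int_X \pi(x, y(x), p(y(x)))\, d\mu(x) = \int_X \pi(x, y(x), H(x, y(x), u(x)))\, d\mu(x),
\end{equation*}
giving $\Pi(p, y) = \tilde{\Pi}(u, y)$.

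For the converse direction, given a feasible $(u, y)$ for $(P)$ I would construct the price menu by a $G$-dual formula: for each $y_0 \in cl(Y)$, set
\begin{equation*}
p(y_0) := \sup_{x' \in X} H(x', y_0, u(x')),
\end{equation*}
interpreting the supremum as $\bar{z}$ if it would otherwise exceed $\bar{z}$ (this is permitted by the footnote construction in Section \ref{section:model} and the inequality on $G(x, y, \bar{z})$ in Assumption 1). Since $y(x) \in \partial^G u(x)$, the defining inequality $u(x') \geq G(x', y(x), H(x, y(x), u(x)))$ rearranges under Assumption 2 to $H(x', y(x), u(x')) \leq H(x, y(x), u(x))$ for all $x'$, with equality at $x' = x$; hence $p(y(x)) = H(x, y(x), u(x))$ and, upon applying $G(x, y(x), \cdot)$, $G(x, y(x), p(y(x))) = u(x)$. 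Incentive compatibility then follows directly from the definition of $p$: for any $x, x'$,
\begin{equation*}
G(x, y(x'), p(y(x'))) \leq G(x, y(x'), H(x, y(x'), u(x))) = u(x) = G(x, y(x), p(y(x))).
\end{equation*}
Lower semicontinuity of $p$ is automatic as the pointwise supremum of the continuous family $\{H(x', \cdot, u(x'))\}_{x' \in X}$ (continuity in $y$ coming from Assumption 1 plus Assumption 2). The participation constraint and objective value match as in the forward direction.

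The main potential obstacle is the backward construction: one must verify that the supremum defining $p$ takes values in $cl(Z)$ and that the extension to all of $cl(Y)$ is compatible with Assumption 1's provision for an outside option. This is resolved by allowing $p$ to attain $\bar{z}$ on attributes no agent purchases, together with the observation (from Assumption 1) that products priced at $\bar{z}$ are dominated by $(y_\emptyset, z_\emptyset)$ and so are irrelevant to any maximizing agent; the identity $p(y(x)) = H(x, y(x), u(x)) \in cl(Z)$ on the range of $y$ follows because $u(x) = G(x, y(x), z)$ for some $z \in cl(Z)$ (given by $G$-convexity applied at $x$ to extract a supporting pair $(y_0, z_0)$ and combined with $y(x) \in \partial^G u(x)$). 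The remaining routine steps merely verify measurability of $p \circ y$ and the equality of the two objective integrals.
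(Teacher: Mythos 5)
Your forward direction is the same as the paper's: define $u(\cdot):=G(\cdot,y(\cdot),p(y(\cdot)))$, invoke Proposition \ref{incen/convex}, and invert $G$ in its third argument to match the integrands. The backward direction, however, takes a genuinely different route. The paper defines $p$ only on the range $y(X)$ by $p(y(x)):=H(x,y(x),u(x))$ and sets $p\equiv\bar z$ elsewhere; it must then separately check that $p$ is well defined when $y(x)=y(x')$ for $x\neq x'$ (via the argument of Remark \ref{rmk:implementability}) and prove lower semicontinuity by a sequential argument on $y(X)$ followed by an lsc-hull extension. You instead define $p(y_0):=\sup_{x'\in X}H(x',y_0,u(x'))$ globally, a Galois-connection/dual-price formula in the spirit of N\"oldeke--Samuelson. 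This buys you two things at once: well-definedness is automatic (the sup does not depend on which $x$ realizes $y_0=y(x)$, and your rearrangement $H(x',y(x),u(x'))\le H(x,y(x),u(x))$ with equality at $x'=x$ correctly shows the sup is attained and reproduces the paper's value on the range), and lower semicontinuity comes for free as a supremum of continuous functions, which is tidier than the paper's hull argument. Incentive compatibility also drops out directly from the definition of $p$ as you note.

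One caveat you should make explicit: the paper defines $H(x',y_0,\cdot)$ only on $G(x',y_0,cl(Z))$, so $H(x',y_0,u(x'))$ may be literally undefined for some $(x',y_0)$ --- e.g.\ when $u(x')>G(x',y_0,\munderbar z)$, no admissible price delivers utility $u(x')$ from $y_0$. You must adopt a convention (such agents impose no constraint and are dropped from the sup, equivalently $H:=-\infty$ there; while $u(x')<G(x',y_0,\bar z)$ is excluded by Assumption 1 together with participation), and verify that the resulting extended family is still a family of lsc functions so that the supremum remains lsc. You flag the range issue ($p$ landing in $cl(Z)$) but not this domain issue; it is routine to repair but is needed for both your definition of $p$ and your incentive-compatibility chain $G(x,y(x'),p(y(x')))\le G(x,y(x'),H(x,y(x'),u(x)))$, which presupposes $H(x,y(x'),u(x))$ exists.
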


\medskip

\subsection{Other Assumptions}\label{subsection:assumptions}
	In the next section, we will show the existence result of the rewritten monopolist's problem $(P)$ given in \eqref{Principal_new_problem}. For the preparation of the main result, we introduce the following assumptions. Note that, even in the one-dimensional case, we assume no single-crossing type condition. We also include two propositions here, which will be employed in the proofs of Proposition \ref{proposition:convergence} and the existence theorem. 
\medskip

\begin{assumption}\label{assmp:Gcoordinate-monotone}
	$G$ is coordinate-monotone in $x$. That is, for each $(y,z)\in cl(Y\times Z)$ and all $ (\alpha, \beta) \in X^2$, if $\alpha_i\ge \beta_i$ for all $ i=1,2,...,M$, then $G(\alpha,y,z)\ge G(\beta, y,z)$.
\end{assumption}

	In Assumption \ref{assmp:Gcoordinate-monotone}, we assume that agent utility increases along each agent attribute coordinate. Given coordinate monotonicity of $G$ in the first variable, one can show that all the $G$-convex functions are nondecreasing. Therefore, the value functions are also monotonic with respect to the agent attributes.
	\medskip

\begin{proposition}\label{nondecreasing}
	Given Assumption \ref{assmp:Gcoordinate-monotone}, $G$-convex functions are nondecreasing in coordinates.
\end{proposition}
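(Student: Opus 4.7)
The plan is to unpack the definition of $G$-convexity at the smaller point and then apply coordinate monotonicity of $G$ directly. Since the definition of $G$-convex supplies a supporting $G$-affine minorant at every point, and Assumption 3 says such minorants are themselves coordinate-nondecreasing in $x$, the desired monotonicity of $u$ should follow immediately by chaining two inequalities.

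More concretely, fix any two points $x, x' \in X$ with $x_i \ge x'_i$ for every $i = 1, \ldots, M$, and let $u$ be $G$-convex. First, I would invoke the definition of $G$-convexity at the point $x'$: there exist $y_0 \in cl(Y)$ and $z_0 \in cl(Z)$ with
\begin{equation*}
u(x') = G(x', y_0, z_0) \quad \text{and} \quad u(w) \ge G(w, y_0, z_0) \text{ for all } w \in X.
\end{equation*}
Second, specializing the inequality to $w = x$ yields $u(x) \ge G(x, y_0, z_0)$. Third, applying Assumption 3 to the fixed pair $(y_0, z_0)$ and to the ordered pair $(x, x')$ gives $G(x, y_0, z_0) \ge G(x', y_0, z_0)$. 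Chaining these together produces $u(x) \ge G(x, y_0, z_0) \ge G(x', y_0, z_0) = u(x')$, which is exactly the claim that $u$ is nondecreasing in each coordinate.

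There is no substantial obstacle here: the argument is a one-line application of the $G$-convex supporting inequality followed by coordinate monotonicity of $G$, and the outside option and price variable play no role beyond being held fixed. The only thing worth being careful about is to extract the supporting datum at $x'$ (not at $x$), since we need the equality $u(x') = G(x', y_0, z_0)$ on the right-hand side of the chain; extracting it at $x$ would leave us with an inequality going the wrong way. Once the supporting point is chosen correctly, the proof is essentially immediate.
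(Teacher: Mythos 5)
Your proof is correct and is essentially identical to the paper's: both extract the $G$-affine support $(y_0,z_0)$ at the smaller point so that equality holds there, then chain $u(x)\ge G(x,y_0,z_0)\ge G(x',y_0,z_0)=u(x')$ via Assumption 3. Your remark about choosing the supporting datum at $x'$ rather than $x$ is exactly the one subtlety, and you handle it correctly.
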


	In the following, we use  $D_x G(x,y,z) := \left(\frac{\partial G}{\partial x_1}, \frac{\partial G}{\partial x_2}, \dots, \frac{\partial G}{\partial x_M}\right)(x,y,z)$ to denote derivative with respect to $x$. For any vector in $\R^M$ or $\R^N$, we use $\norm{\cdot}$ and $\norm{\cdot}_{\alpha}$ to denote its Euclidean  $2$-norm and $\alpha$-norm ($\alpha \ge 1$), respectively. For example, for $x\in \R^M$, we have $\norm{x} = \sqrt{\sum_{i=1}^{M} x_i^2}$ and $\norm{x}_{\alpha} = \left(\sum_{i=1}^{M} |x_i|^{\alpha}\right)^{\frac{1}{\alpha}}$. We say a function $f: X \times Y \times Z \rightarrow \R$ is asymptotically decreasing super-linearly (respectively, linearly) with respect to the variable $y\in Y$ if and only if $\frac{f(x,y,z)}{\norm{y}} \rightarrow -\infty ~(\text{respectively, }C(x,z))$ as $\norm{y}\rightarrow +\infty$, where $C(x,z)$ is a negative variable depending only on $x$ and $z$. We say a function $f: Y\rightarrow \R$ is asymptotically super-linear if and only if $\frac{f(y)}{\norm{y}} \rightarrow +\infty$ as $\norm{y}\rightarrow +\infty$.\medskip
 
	In Rochet-Chon\'e's model, $H(x,y,u) = x\cdot y -u$ and $\pi(x,y,z) = z-c(y)$ with a quadratic cost function $c$. In this case, $\pi(x,y,H(x,y,u)) = x\cdot y -u -c(y)$. Since $c$ is asymptotically super-linear and the space $X$ is bounded, it is reasonable to assume the following:\medskip

\begin{assumption}\label{assmp:Gtech0}
	$\pi(x,y,H(x,y,u))$ is {asymptotically decreasing (super-)linearly with respect to $y$ and  asymptotically decreasing at least linearly with respect to $u$}. More precisely, there exist $\alpha \ge 1$, $a_1, a_2> 0$, and $b\in \R$ such that $\pi(x,y,H(x,y,u)) \le -a_1 \norm{y}_{\alpha}^{\alpha} - a_2 u +b$ for all $ (x, y, u)\in \left\{ (x,y, G(x,y,z))|~ x\in X, y\in cl(Y), z\in \R \right\}$; or equivalently, $\pi(x,y,z) + a_2 G(x,y,z) \le -a_1 \norm{y}_{\alpha}^{\alpha}  +b$ for all $ (x, y, z)\in X\times cl(Y)\times \R$.
\end{assumption}
	
	As shown in the alternative formulation, Assumption \ref{assmp:Gtech0} requires the existence of some weighted surplus which is asymptotically decreasing super-linearly with respect to the product (when $\alpha >1$). In the case where $Y$ is bounded, Assumption \ref{assmp:Gtech0} is equivalent to the existence of some weighted surplus bounded from above.   
	\medskip

	Assumptions \ref{assmp:Gtech1} - \ref{assmp:Gtech3} are some technical assumptions on $D_xG$, which are automatically satisfied for $X$, $Y$, and $Z$ being bounded.
	\medskip

\begin{assumption}\label{assmp:Gtech1}
	$D_x G(x,y,z)$ is Lipschitz with respect to $x$, uniformly in $(y,z)$, meaning there exists $k\in \R$ such that $\norm{D_xG(x,y,z)-D_x G(x',y,z)}\le k\norm{x-x'}$ for all $(x, x',y, z)\in X^2\times cl(Y) \times cl(Z)$.
\end{assumption}

\begin{assumption}\label{assmp:Gtech2}
	There exist $ \beta \in (0, \alpha]$, $c>0$, and $ d\in \R$ such that $\norm{D_x G(x,y,z)}_{1}\le c\norm{y}_{\beta}^{\beta} +d$ for all $ (x, y, z)\in X\times cl(Y) \times cl(Z)$.
\end{assumption}

\begin{assumption}\label{assmp:Gtech3}
	 Coercivity of {$1$-norm} of $(D_xG)$. For all $ s>0$, there exists $r>0$ such that $\sum_{i=1}^{M} |D_{x_i}G(x,y,z)|\ge s$ for all $(x, y, z)\in X\times  cl(Y) \times cl(Z)$, whenever $\norm{y}\ge r$.
\end{assumption}

	Allowing Assumption \ref{assmp:Gcoordinate-monotone}, {the derivatives $D_{x_i}G$ are always nonnegative; therefore,} we no longer need to take absolute values of $D_{x_i}G$ in the inequality of Assumption \ref{assmp:Gtech3}. Then Assumption~\ref{assmp:Gtech3} says that the marginal utility of agents who select the same product $y$ will increase to infinity as $\norm{y}$ approaches infinity, uniformly for all agents and prices. For instance, when $M = N$, utility $G(x,y,z) = \sum_{i=1}^{M} x_iy_i^2 -f(y,z) $, with $f \in C^1(Y \times Z)$ strictly increasing with respect to $z$, satisfies Assumption~\ref{assmp:Gtech3}, because 
	$$\sum_{i=1}^{M} \left|D_{x_i}G(x,y,z)\right| = \sum_{i=1}^{M} D_{x_i}G(x,y,z) = \sum_{i=1}^{M} y_i^2 \rightarrow +\infty$$ 
	as $\norm{y} \rightarrow + \infty$. In addition, with appropriate $\pi$, this $G$ could satisfy all the other assumptions. For instance, if we take $\bar{z} = +\infty$, then profit $\pi(x,y,z) = z - \sum_{i=1}^{M} y_i^4$ and agents' utility $G(x,y,z)= \sum_{i=1}^{M} x_iy_i^2 - z^2 $ together satisfy Assumption \ref{assmp:Gregular} -  \ref{assmp:Pi1}.
	\medskip

	In general, if $Y$ is bounded, any $G$ in the form of $G(x,y,z) = b(x,y) - f(y,z)$, with $b \in C^1(cl(X\times Y))$ and $f\in C^0(cl(Y \times Z))$, satisfies Assumption~\ref{assmp:Gtech1} - \ref{assmp:Gtech3}. This class of separable utility functions for the agent are also considered in Nöldeke-Samuelson \cite{NoldekeSamuelson15p}, where they serve as the prime example for utility functions that are not necessarily quasilinear but yield strong implementability. 
	\medskip

	Proposition \ref{Subdiff/Bdd} presents that uniform boundedness of the agents' value functions on some compact subset implies uniform boundedness of the corresponding agents' choices of their favorite products. 
	\medskip

\begin{proposition}[{Uniform boundedness of $G$-convex functions on a compact set implies that of $G$-subdifferentials}]\label{Subdiff/Bdd}
	Given Assumptions \ref{assmp:Gregular}, \ref{assmp:Gdecreasing}, \ref{assmp:Gcoordinate-monotone}, \ref{assmp:Gtech3}, and let $u(\cdot)$ be a $G$-convex function on $X$, $\omega$ be a compact subset of $X$, $\delta>0, R>0$, satisfying $\omega+\delta\overline{B(0,1)}\subset X$ and $|u(x)|\le R$ for all $x\in \omega + \delta \overline{B(0,1)}$ (here, $\overline{B(0,1)}$ denotes the closed unit ball of $\R^M$). Then, there exists $T = T(\omega,\delta, R) > 0$ such that $\norm{y}\le T$ for any $x \in \omega$ and any $y\in \partial^Gu(x)$.
\end{proposition}

	Assumptions~\ref{assmp:Pi1} states constraints on continuity of the principal's profit function $\pi$, integrability of participation constraint $u_{\emptyset}$. From now on, we assume $\mu$ to be equivalent to the Lebesgue measure. One might seek to extend the corresponding results to more general settings, e.g., assuming $\mu$ to be absolutely continuous with respect to the Lebesgue measure.

\begin{assumption}\label{assmp:Pi1}
	The profit function $\pi$ is continuous on $cl(X\times Y\times Z)$. The participation constraint $u_{\emptyset}$ is integrable with respect to $d \mu$, where the measure  $\mu$ is equivalent to the Lebesgue measure restricted on $X$.
\end{assumption}

	For $\alpha \ge 1$, denote $L^{\alpha}(X)$ as the space of measurable functions for which the $\alpha$-th power of the absolute value is Lebesgue integrable on $X$. That is, a function $f: X\rightarrow \R$ is in $L^{\alpha}(X)$ if and only if 
	$$\norm{f}_{L^{\alpha}(X)}:= \left(\int_{X} |f(x)|^{\alpha} dx \right)^{\frac{1}{\alpha}} <+\infty,$$
	where $\norm{f}_{L^{\alpha}(X)}$ is the associated $L^{\alpha}$ norm of $f$ on $X$.  
	For instance, Assumption \ref{assmp:Pi1} implies $u_{\emptyset}\in L^{1}(X)$.  For any function $f$ on $X$, its {\em essential supremum} is defined as:
	\begin{equation}
		\norm{f}_{L^{\infty}(X)} := \inf\{C\ge 0: |f(x)| \le C \text{ for Lebesgue almost every } x \in X \}.
	\end{equation}
	Denote by $L^{\infty}(X)$ the space of measurable functions on $X$ whose essential supremum is finite. Also, denote by $L_{loc}^{\infty}(X)$ the space of measurable functions on $X$ which belong to $L^{\infty}(K)$ for all compact subsets $K$ of $X$. 
	\medskip

\bigskip

\section{Main result}\label{section:mainresult}
	In this section, we state the existence theorem, the proof of which is provided in Section \ref{section:proofs}. 

\begin{theorem}[Existence]\label{maintheorem}
	Under Assumptions \ref{assmp:Gregular} - \ref{assmp:Pi1}, the monopolist's problem $(P)$ admits at least one solution.
\end{theorem}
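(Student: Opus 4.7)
The plan is to apply the direct method of the calculus of variations. First, I would observe that problem $(P)$ is feasible: the constant assignment $u(x)=G(x,y_\emptyset,z_\emptyset)$ with $y(x)\equiv y_\emptyset$ satisfies every constraint. To see that the supremum is finite, I would apply Assumption 10 pointwise with $z_n(x):=H(x,y_n(x),u_n(x))$, which gives $\pi(x,y_n,z_n)\le C_0-u_n(x)$; integration together with $u_n\ge u_\emptyset\in L^1(X)$ (Assumption 8) yields a finite upper bound. The same inequality, rearranged, shows that for any maximizing sequence $(u_n,y_n)$ the integrals $\int_X u_n\,d\mu$ are uniformly bounded, so $\{u_n\}$ is bounded in $L^1(X)$.

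Next, I would extract a locally uniformly convergent subsequence. By Proposition \ref{nondecreasing}, each $u_n$ is nondecreasing coordinatewise, so the $L^1$ bound propagates to a uniform pointwise bound $|u_n|\le R(K)$ on any compact set $K$ lying at positive distance from $\partial X$: a nondecreasing function bounded in $L^1$ cannot be too large at $x$ (the coordinatewise-larger cone through $x$ would contribute too much to the $L^1$ norm), nor too negative (using $u_n\ge u_\emptyset$ and monotonicity). Proposition \ref{Subdiff/Bdd} then yields $\|y\|\le T(K)$ for all $y\in\partial^G u_n(x)$ with $x\in K$; combined with the $C^1$ dependence of $G$ (Assumptions 1 and 5), this makes $\{u_n\}$ equi-Lipschitz on $K$. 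Arzel\`a--Ascoli on an exhaustion $K_j\nearrow X$, together with a diagonal argument, would produce a subsequence (not relabeled) with $u_n\to u_\infty$ locally uniformly; the local boundedness of $y_n$ on each $K_j$ then permits a further extraction so that $y_n\to y_\infty$ pointwise almost everywhere.

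With $z_n(x)\to z_\infty(x):=H(x,y_\infty(x),u_\infty(x))$ almost everywhere by continuity of $H$ (Proposition \ref{lemma_continuity}), I would pass to the limit in the pointwise inequality $u_n(x')\ge G(x',y_n(x),z_n(x))$. Continuity of $G$ combined with the locally uniform convergence $u_n\to u_\infty$ delivers $u_\infty(x')\ge G(x',y_\infty(x),z_\infty(x))$ for all $x'\in X$, with equality at $x'=x$, so $y_\infty(x)\in\partial^Gu_\infty(x)$, and by Lemma \ref{convex-subdiff} the function $u_\infty$ is $G$-convex; the participation constraint $u_\infty\ge u_\emptyset$ is inherited in the limit. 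For optimality, Assumption 10 again supplies the integrable envelope $\pi(x,y_n,z_n)\le C_0-u_\emptyset(x)$; continuity of $\pi$ (Assumption 9) and $H$ then ensure pointwise a.e.\ convergence of the integrand, and the reverse Fatou lemma gives $\tilde\Pi(u_\infty,y_\infty)\ge\limsup\tilde\Pi(u_n,y_n)=\sup\tilde\Pi$, establishing optimality.

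The main obstacle will be the compactness step. Since $X$ is only open and $Y$ may be unbounded, the $L^1$ bound on $\{u_n\}$ alone does not yield the local uniform control needed for Arzel\`a--Ascoli; it is precisely the coordinatewise monotonicity from Assumption 3, together with the coercivity and growth conditions of Assumptions 4, 6, and 7, that prevent $y_n$ from escaping to infinity on sets of positive measure and that permit uniform bounds on $u_n$ over compact subsets of $X$. A secondary technical point is that the pointwise a.e.\ limit of $y_n$ must be promoted to an everywhere-valid selection $y_\infty(x)\in\partial^G u_\infty(x)$, which can be arranged by redefining $y_\infty$ on the exceptional null set to be any element of the (nonempty) $G$-subdifferential of $u_\infty$.
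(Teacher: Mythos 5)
Your overall strategy --- direct method on a maximizing sequence, local compactness for $\{u_n\}$, $G$-convexity of the limit via Lemma \ref{convex-subdiff}, and a Fatou-type argument using the integrable envelope $C_0-u_\emptyset$ supplied by Assumptions 8--10 --- matches the paper's. Your route to compactness of $\{u_n\}$ (local $L^\infty$ bounds from the $L^1$ bound plus coordinatewise monotonicity, then Proposition \ref{Subdiff/Bdd} and Assumption 6 to get local equi-Lipschitz bounds, then Arzel\`a--Ascoli on an exhaustion) is a legitimate and somewhat more elementary alternative to the paper's path, which instead bounds $\{y_n\}$ in $L^{\alpha}$ via Assumption 4, deduces a $W^{1,1}$ bound on $\{u_n\}$ via Assumption 6, and invokes Lemma \ref{lemma1} and Proposition \ref{proposition:convergence}.

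There is, however, a genuine gap at the step ``the local boundedness of $y_n$ on each $K_j$ then permits a further extraction so that $y_n\to y_\infty$ pointwise almost everywhere.'' Uniform boundedness of a sequence of measurable maps does not yield an almost-everywhere convergent subsequence (consider $y_n(x)=\sin(nx)$ on an interval), and since the paper deliberately assumes no single-crossing or twist condition, you also cannot recover convergence of $y_n(x)$ from the a.e.\ convergence of $\nabla u_n(x)=D_xG(x,y_n(x),z_n(x))$. Without this step your limiting allocation $y_\infty$ is not defined as a measurable map, and the ``pointwise a.e.\ convergence of the integrand'' feeding your reverse Fatou argument has no meaning. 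The paper's way around this is the one essential idea your proposal lacks: define $y^*$ directly as a \emph{measurable selection} of the correspondence $x\mapsto\argmin_{y\in\partial^Gu^*(x)}-\pi(x,y,H(x,y,u^*(x)))$ (via a measurable selection theorem, with Proposition \ref{Subdiff/Bdd} providing compactness of the values), and then, for each fixed $x$ separately, extract an $x$-dependent subsequence of $\{y_n(x)\}$ converging to some $\tilde y(x)$; the limiting inequality shows $\tilde y(x)\in\partial^Gu^*(x)$, so that $\limsup_n\pi(x,y_n(x),H(x,y_n(x),u_n(x)))\le\pi(x,y^*(x),H(x,y^*(x),u^*(x)))$ pointwise, which is all that Fatou's lemma requires. (The same $x$-by-$x$ extraction establishes $G$-convexity of $u^*$, so that part of your argument survives without any global convergence of $y_n$.) A smaller point: your pointwise lower bound on $u_n$ over compacta should come from coordinatewise monotonicity together with the $L^1$ bound, not from $u_n\ge u_\emptyset$, since Assumption 8 makes $u_\emptyset$ integrable but not locally bounded below.
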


	Technically, in order to demonstrate existence, we start from a sequence of value-product pairs, whose total profits have a limit that is equal to the supremum of $(P)$.
	Then we need to show that this sequence converges, up to a subsequence, to a pair of limit mappings. Then we show this limit value-product pair satisfies the constraints of $(P)$, and its corresponding total payoff is no worse than those of any other admissible pairs. \medskip

	Suppose the space of products $Y$ is bounded, then Assumption \ref{assmp:Gtech0} - \ref{assmp:Gtech3} could be simplified, and Proposition \ref{Subdiff/Bdd} holds automatically. Besides, some steps in the proof of the main theorem would be simplified. If, in addition, both spaces $Y$ and $Z$ are bounded, Assumption \ref{assmp:Gtech0} - \ref{assmp:Gtech3} are automatically satisfied and the proofs will be much simpler.
	\medskip

	In the following, we denote by $W^{1,1}(X)$  the Sobolev space of $L^1$ functions whose first derivatives exist in the weak sense and belong to $L^1(X)$. 
	For any function $f: X \rightarrow \R$, define its $W^{1,1}(X)$ norm as 
	\begin{equation*}
		\norm{f}_{W^{1,1}(X)} = \norm{f}_{L^1(X)} + \sum_{i=1}^{M}\norm{D_{x_i}f}_{L^1(X)}.
	\end{equation*}
	For more properties of Sobolev spaces and weak derivatives, see Evans~\cite[Chapter 5]{Evans98}. If $\omega$ is some open subset of $X$, the notation $\omega \subset \subset X$ means that the closure of $\omega$ is also included in $X$.
	\medskip

	Lemma \ref{lemma1} provides convergence results for sequences of convex functions, which are uniformly bounded in Sobolev spaces on open convex subsets. We state this classical result without proof, which can be found in Carlier \cite{Carlier01}.
	\medskip

\begin{lemma}\label{lemma1}
	Let $\{u_n\}_{n\in \N}$ be a sequence of convex functions on $X$ such that for every open convex set $\omega \subset \subset X$, the following holds:
	\begin{equation*}
	\sup\limits_{n} \norm{u_n}_{W^{1,1}(\omega)} < +\infty.
	\end{equation*}
	Then there exists a function $u^*$ which is convex in $X$, a measurable subset $A$ of $X$ and a subsequence again labeled $\{u_n\}_{n\in \N}$ such that\\
	1. $\{u_n\}_{n\in \N}$ converges to $u^*$ uniformly on compact subsets of $X$;\\
	2. $\{D u_n\}_{n\in \N}$ converges to $D u^*$ pointwise in $A$ and $\dim_{H}(X\setminus A)\le M-1$, where $\dim_{H}(X\setminus A)$ is the Hausdorff dimension of $X\setminus A$.
\end{lemma}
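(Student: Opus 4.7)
The plan is to combine classical convex analysis with a compactness/diagonal argument. Convexity is a very rigid constraint, so $L^1$ bounds on interior sets already force local boundedness, local Lipschitz estimates and equicontinuity; these give both the uniform subsequential convergence on compacts and, via almost-everywhere differentiability of convex functions, the pointwise gradient convergence away from a set of Hausdorff dimension at most $M-1$.

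First, I would establish local uniform $L^{\infty}$ bounds. Given an open convex $\omega \subset\subset X$, choose a slightly larger convex $\omega' \subset\subset X$ with $\omega + \delta \overline{B(0,1)} \subset \omega'$ for some $\delta > 0$. For a convex function $u_n$, the value at an interior point is controlled above by a pointwise bound (take any affine support from below and use that $u_n$ lies below its chord extensions) and below by a mean-value estimate over a small ball using the $L^1$ bound on $\omega'$. More concretely, convex functions satisfy a Harnack-type inequality connecting pointwise values to averages, so $\sup_n \|u_n\|_{L^1(\omega')} < +\infty$ implies $\sup_n \|u_n\|_{L^{\infty}(\omega)} < +\infty$. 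Standard convex-analytic Lipschitz estimates then yield $\sup_n \operatorname{Lip}(u_n|_{\omega}) < +\infty$, hence equicontinuity on $\omega$. Exhausting $X$ by an increasing sequence $\omega_k \subset\subset \omega_{k+1} \subset\subset X$ of open convex sets and applying Arzelà–Ascoli together with a diagonal extraction produces a subsequence (still labeled $\{u_n\}$) converging uniformly on every compact subset of $X$ to some function $u^*$. Convexity is preserved under pointwise limits, so $u^*$ is convex on $X$; this finishes item 1.

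For item 2, let $A_n \subset X$ denote the set of differentiability points of $u_n$ and $A_{\infty}$ that of $u^*$. A classical theorem (Alexandrov/Alberti–Ambrosio) states that the non-differentiability set of a convex function on an open subset of $\R^M$ is contained in a countable union of Lipschitz graphs of codimension one, so in particular has Hausdorff dimension at most $M-1$. Setting $A := A_{\infty} \cap \bigcap_n A_n$, the complement $X \setminus A = (X \setminus A_{\infty}) \cup \bigcup_n (X \setminus A_n)$ is a countable union of sets of Hausdorff dimension at most $M-1$, hence itself of Hausdorff dimension at most $M-1$. On $A$, pointwise gradient convergence $\nabla u_n(x) \to \nabla u^*(x)$ follows from a standard convexity-plus-uniform-convergence argument: any cluster point $p$ of $\{\nabla u_n(x)\}$ must satisfy $u^*(x') \ge u^*(x) + \langle p, x' - x\rangle$ for all $x' \in X$ (pass to the limit in the subgradient inequality using uniform convergence), and at a differentiability point $x$ of $u^*$ the subdifferential is a singleton, forcing $p = \nabla u^*(x)$.

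The main obstacle is the codimension-one Hausdorff estimate on the non-differentiability set of a convex function, which is not elementary but is a well-established result in the geometric theory of convex functions; the remaining pieces are routine manipulations of convexity, Arzelà–Ascoli and subgradient arguments. Since the lemma is cited directly from Carlier~\cite{Carlier01}, the paper invokes this black box without reproving it, and the present proposal mirrors that strategy.
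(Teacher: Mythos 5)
The paper deliberately states this lemma without proof, citing Carlier (2001), so there is no in-text argument to compare against; your proof follows the same standard route as that source. Your argument is essentially correct and complete as a sketch: the local $L^1$-to-$L^\infty$ and Lipschitz estimates for convex functions, Arzel\`a--Ascoli with a diagonal extraction over a convex exhaustion, the codimension-one Hausdorff bound on the non-differentiability set of a convex function (a result due to Zaj\'{\i}\v{c}ek --- Alexandrov's theorem concerns second-order differentiability, so the attribution should be adjusted), countable stability of Hausdorff dimension, and the singleton-subdifferential argument at common differentiability points all fit together as you describe, with the one small point worth making explicit that the cluster points of $\{\nabla u_n(x)\}$ exist precisely because of the uniform local Lipschitz bounds you established in the first step.
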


	We extend the above convergence result to $G$-convex functions in the following proposition, which is required in the proof of the existence theorem, as it extracts a limit function from a converging sequence of value functions.

\begin{proposition}\label{proposition:convergence}
	Assume Assumptions \ref{assmp:Gregular}, \ref{assmp:Gdecreasing}, \ref{assmp:Gcoordinate-monotone}, \ref{assmp:Gtech1}, \ref{assmp:Gtech3}, and let $\{u_n\}_{n\in \N}$ be a sequence of $G$-convex functions on $X$ such that for every open convex set $\omega \subset \subset X$, the following holds:
	\begin{equation*}
	\sup\limits_{n} \norm{u_n}_{W^{1,1}(\omega )} < +\infty.
	\end{equation*}
	Then there exists a function $u^*$ which is  $G$-convex in $X$, a measurable subset $A$ of $X$, and a subsequence again labeled $\{u_n\}_{n\in \N}$ such that\\
	1. $\{u_n\}_{n\in \N}$ converges to $u^*$ uniformly on compact subsets of $X$;\\
	2. $\{D u_n\}_{n\in \N}$ converges to $D u^*$ pointwise in $A$ and $\dim_{H}(X\setminus A)\le M-1$.
\end{proposition}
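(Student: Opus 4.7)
The plan is to reduce Proposition~\ref{proposition:convergence} to Lemma~\ref{lemma1} by adding a common quadratic to each $u_n$ that turns it into an ordinary convex function. Under Assumption~5, the map $x \mapsto G(x,y,z)$ is $C^{1,1}$ in $x$ with Lipschitz constant $k$ independent of $(y,z)$, so the integral form of Taylor's theorem yields the semi-convexity bound
\begin{equation*}
G(x,y,z) \;\ge\; G(x_0,y,z) + D_xG(x_0,y,z)\cdot(x-x_0) - \tfrac{k}{2}\|x-x_0\|^2
\end{equation*}
for all $x,x_0 \in X$ and all $(y,z)$. Set $v_n(x) := u_n(x) + \tfrac{k}{2}\|x\|^2$; expanding $\|x\|^2-\|x_0\|^2 = 2x_0\cdot(x-x_0) + \|x-x_0\|^2$ and using the $G$-convexity of $u_n$ (which at each $x_0$ produces $(y_0,z_0)\in cl(Y)\times cl(Z)$ with $u_n(x_0)=G(x_0,y_0,z_0)$ and $u_n \ge G(\cdot,y_0,z_0)$), one reads off a bona fide affine minorant of $v_n$ touching it at $x_0$. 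Hence every $v_n$ is ordinary-convex on $X$. Since $X$ is bounded by Assumption~1, the weight $\tfrac{k}{2}\|x\|^2$ is $W^{1,\infty}$ on $X$, so the uniform $W^{1,1}(\omega)$-bounds on $\{u_n\}$ transfer to $\{v_n\}$ for every open convex $\omega \subset\subset X$.

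Applying Lemma~\ref{lemma1} to $\{v_n\}$ extracts a subsequence converging uniformly on compact subsets to a convex function $v^*$, with $\nabla v_n \to \nabla v^*$ pointwise on a measurable $A \subset X$ whose complement has Hausdorff dimension at most $M-1$. Defining $u^* := v^* - \tfrac{k}{2}\|x\|^2$ delivers both convergence conclusions of the proposition for the corresponding subsequence of $u_n$. It remains to prove $G$-convexity of $u^*$, for which, by Lemma~\ref{convex-subdiff}, it suffices to exhibit a $G$-subgradient of $u^*$ at an arbitrary point $x_0 \in X$. Fix $\delta>0$ with $\overline{B(x_0,2\delta)} \subset X$; convexity of $v^*$ makes it continuous, hence bounded on this ball, and uniform convergence $v_n \to v^*$ then yields a uniform bound $|u_n|\le R$ on $\overline{B(x_0,2\delta)}$. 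Invoking Proposition~\ref{Subdiff/Bdd} with $\omega = \overline{B(x_0,\delta)}$ (whose hypotheses require Assumptions~2, 3, 7, all available here), one obtains a constant $T$ with $\|y_n\|\le T$ for every $y_n \in \partial^G u_n(x_0)$. Pass to a further subsequence so that $y_n \to y^* \in cl(Y)$.

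Set $z_n := H(x_0,y_n,u_n(x_0))$, so that $G(x_0,y_n,z_n)=u_n(x_0)$ and $u_n(x) \ge G(x,y_n,z_n)$ for every $x \in X$. Viewing $cl(Z)$ as a compact subinterval of $[\,\underline z,+\infty\,]$ (with $G$ extended continuously to $\bar z$ by Assumption~1), extract yet a further subsequence with $z_n \to z^* \in cl(Z)$; by continuity of $G$, $G(x_0,y^*,z^*)=u^*(x_0)$, and Assumption~2 (strict monotonicity in $z$) together with Proposition~\ref{lemma_continuity} then identifies $z^* = H(x_0,y^*,u^*(x_0))$. Passing to the limit in $u_n(x) \ge G(x,y_n,z_n)$ gives $u^*(x) \ge G\!\left(x,y^*,H(x_0,y^*,u^*(x_0))\right)$ for all $x \in X$, i.e.\ $y^* \in \partial^G u^*(x_0)$, and Lemma~\ref{convex-subdiff} finishes the proof. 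The main obstacle is this last limit passage when $\bar z=+\infty$: one has to handle the possibility $z^*=\bar z$ gracefully, leaning on the continuous extension of $G$ supplied by Assumption~1 and on the continuity of $H$ from Proposition~\ref{lemma_continuity} so that the subsequence argument for $z_n$ remains robust at the boundary of $cl(Z)$; the convexification trick in the first paragraph is also delicate in that it genuinely uses the uniform Lipschitz constant from Assumption~5 and fails without some semi-convexity of $G$ in $x$.
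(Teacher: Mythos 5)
Your proof is correct and follows essentially the same route as the paper: the same quadratic convexification $v_n = u_n + \lambda\|x\|^2$ with $\lambda$ tied to the uniform Lipschitz constant of $D_xG$ from Assumption~5, reduction to Lemma~\ref{lemma1}, and recovery of the $G$-convexity of $u^*$ by bounding $\partial^G u_n(x_0)$ via Proposition~\ref{Subdiff/Bdd}, extracting a convergent subsequence of $G$-subgradients, and passing to the limit using the continuity of $G$ and $H$ (Proposition~\ref{lemma_continuity}). The only (harmless) deviation is that you obtain the local uniform bound $|u_n|\le R$ from the locally uniform convergence already delivered by Lemma~\ref{lemma1}, whereas the paper derives it directly from the $W^{1,1}$ hypothesis via a contradiction argument using the monotonicity of $G$-convex functions (Proposition~\ref{nondecreasing}); both are valid.
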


	In the proof of Proposition \ref{proposition:convergence}, we show that the sequence of $G$-convex functions is convergent by applying results from Lemma \ref{lemma1}, then prove that the limit function is also $G$-convex. 
	\medskip

\bigskip

\section{Future Work}\label{section:futurework}

	We have strong interest in giving an explicit solution for the non-quasilinear example on the real line and in high dimension. We also would like to investigate, among other things, the conditions under which the matching map $y: X \rightarrow cl(Y)$ is continuous and/or differentiable. Given the technical arguments employed in this paper, it may be very fruitful to study possible generalizations of other known results for convex functions to $G$-convex functions.
	\medskip

\bigskip

\section{Proofs}\label{section:proofs}

\begin{proof}[Proof of Proposition \ref{lemma_continuity}]
	(Proof by contradiction). Suppose $H$ is not continuous, then there exists a sequence $\{(x_n, y_n, z_n)\}_{n\in \N} \subset cl(X\times Y \times Z)$ converging to $(x, y, z)$ and $\varepsilon >0$ such that 
	$$|H(x_n, y_n, z_n) - H(x,y,z)|>\varepsilon \text{ for all $n\in \N$}.$$ 
	Without loss of generality, we assume $H(x_n, y_n, z_n) - H(x,y,z)>\varepsilon$ for all $n\in \N$. Therefore, we have $H(x_n, y_n, z_n) > H(x,y,z)+\varepsilon$. By Assumption \ref{assmp:Gdecreasing}, this implies 
	$$z_n < G(x_n, y_n, H(x,y,z)+\varepsilon) \text{ for all $n\in \N$.}$$ 
	Taking limit $n\rightarrow \infty$ at both sides, since $G$ is continuous from Assumption \ref{assmp:Gregular}, we have 
	$$z \le G(x, y, H(x,y,z)+\varepsilon).$$ This implies $H(x,y,z) \ge H(x,y,z)+\varepsilon$, a contradiction.
\end{proof}

\vspace{0.3cm}

\begin{proof}[Proof of Lemma \ref{convex-subdiff}]
	Assume $u$ is $G$-convex, we want to show that $u$ is $G$-subdifferentiable everywhere, i.e., we need to prove  $\partial^G u(x_0)\neq \emptyset$ for all $x_0\in X$.
	
	Since $u$ is $G$-convex, by definition, for each $x_0$, there exists $ y_0, z_0$ such that $u(x_0) = G(x_0,y_0,z_0)$ and
	\begin{equation*}
	u(x)\ge G(x, y_0, z_0) = G(x, y_0, H(x_0,y_0,u(x_0))) \text{  for all  $ x \in X$}.
	\end{equation*}
	By the definition of $G$-subdifferentiability, $y_0 \in \partial^G u(x_0)$, i.e., $\partial^G u(x_0) \neq \emptyset$.
	
	On the other hand, assume $u$ is $G$-subdifferentiable everywhere, then for each $ x_0 \in X$,  there exists $ y_0 \in \partial^G u(x_0)$. Set $z_0:=H(x_0,y_0,u(x_0))$ so that $u(x_0) = G(x_0, y_0, z_0)$.
	
	Since $y_0\in \partial^G u(x_0)$, we have 
	\begin{equation*}
	u(x)\ge G(x,y_0,H(x_0,y_0,u(x_0))) = G(x,y_0,z_0) \text{  for all $x\in X$ }.
	\end{equation*}
	By definition, $u$ is $G$-convex.
\end{proof}

\vspace{0.3cm}

\begin{proof}[Proof of Lemma \ref{incen/convex}]
	$``\Rightarrow"$. Suppose $(y,z)$ is incentive compatible. For any fixed $x_0 \in X$, let $y_0 = y(x_0)$ and $z_0 = z(x_0)$. Then 
	$$u(x_0) = G(x_0, y(x_0), z(x_0)) = G(x_0, y_0, z_0).$$ 
	By incentive compatibility of the contract $(y,z)$, one has 
	$$G(x, y(x), z(x)) \ge G(x, y(x_0), z(x_0)) \text{ for any $x\in X$.}$$ This implies $u(x) \ge G(x,y_0,z_0)$ for any $x\in X$, because $u(x)= G(x, y(x), z(x))$,  $y_0 = y(x_0)$, and $z_0 = z(x_0)$. By definition, $u$ is $G$-convex. 
	
	Since $u(x_0)=G(x_0, y_0, z_0)$, by definition of function $H$ one has $z_0 = H(x_0, y_0, u(x_0))$.  Combining with $u(x) \ge G(x, y_0, z_0)$ for any $x\in X$, which is concluded from above, we have 
	$$u(x)\ge G(x, y_0, H(x_0, y_0, u(x_0))) \text{ for any $x\in X$.}$$ 
	By definition of  $G$-subdifferentiability, one has $y_0 \in \partial^G u(x_0)$. Therefore, $y(x_0) = y_0 \in \partial^G u(x_0)$.
	
	$``\Leftarrow"$. Assume that $u = G(x, y(x),z(x))$ is $G$-convex and $y(x)\in \partial^G u(x)$ for any $x\in X$. For any fixed $x \in X$, since $y(x)\in \partial^G u(x)$, one has 
	\begin{equation}\label{eqn_prop3.4}
		u(x')\ge G(x', y(x), H(x, y(x), u(x))) \text{ for any $x'\in X$.}
	\end{equation} 
	 
	Since $u(x) = G(x, y(x),z(x))$, by definition of function $H$, one has $z(x) = H(x,y(x), u(x))$. Combined with the inequality \eqref{eqn_prop3.4}, we have $$u(x')\ge G(x', y(x), z(x)) \text{ for any $x'\in X$.}$$ Notice $u(x') = G(x',y(x'),$ $z(x')) $. Thus, 
	$$G(x',y(x'),z(x')) = u(x') \ge G(x', y(x), z(x)) \text{ for any $x, x'\in X$.}$$
	By definition, $(y,z)$ is incentive compatible.
\end{proof}

\vspace{0.3cm}

\begin{proof}[Proof of Remark \ref{rmk:implementability}]
	One direction is easier: given $p$ and $y$, define $z(\cdot):= p(y(\cdot))$. Then the conclusion follows directly. \medskip
	
	Given an incentive-compatible pair $(y, z): X \rightarrow cl(Y) \times \R$, we need to construct a price menu $p: cl(Y)\rightarrow \R$. If $y= y(x)$ for some $x\in X$, define $p(y):= z(x)$; for any other $y \in cl(Y)$, define $p(y) := \bar{z}$. \medskip
	
	We first show $p$ is well-defined. Suppose $y(x) = y(x')$ with $x\ne x'$, from incentive compatibility of $(p,y)$, we have 
	$$G(x,y(x), z(x)) \ge G(x, y(x'), z(x')) = G(x, y(x), z(x')).$$ Since $G$ is strictly decreasing with respect to its third variable, the above inequality implies $z(x) \le z(x')$. Similarly, one has $z(x) \ge z(x')$. Therefore, $z(x) = z(x')$ and thus $p$ is well-defined. \medskip
	
	The incentive compatibility of $(y, p(y))$ follows from that of $(y, z)$ and definition of $p$.
\end{proof}

\vspace{0.3cm}

\begin{proof}[Proof of Corollary \ref{cor:implementable}]
	One direction is immediately derived from the definition of implementability and Lemma \ref{incen/convex}.
	\medskip
	
	Suppose there exists some convex function $u$ such that $y(x) \in \partial^G u(x)$ for each $x \in X$. Define $z(\cdot):= H(\cdot, y(\cdot), u(\cdot))$, then $u(x) = G(x, y(x), z(x))$.
	Lemma \ref{incen/convex} implies $(y, z)$ is incentive compatible, and thus $y$ is implementable.
\end{proof}

\vspace{0.3cm}

\begin{proof} [Proof of Proposition \ref{equiv_form}] 
	We need to prove both directions for equivalence of $(P_0)$ and $(P)$.
	\medskip
	
	1. For any incentive-compatible pair $(y, p(y))$, define $u(\cdot) := G(\cdot,y(\cdot), p(y(\cdot)))$. Then by Lemma \ref{incen/convex}, we have $u(\cdot)$ is $G$-convex and $y(x) \in \partial^G u(x)$ for all $x \in X$. From the participation constraint, 
	$$G(x, y(x), p(y(x))) \ge u_{\emptyset}(x) \text{ for all $x\in X$. }$$
	This implies $u(x)\ge u_{\emptyset}(x)$ for all $x\in X$. Besides, two integrands are equal: $\pi(x, y(x), p(y(x))) = \pi(x,y(x), H(x,y(x), u(x)))$. Therefore, $(P_0) \le (P)$.
	\medskip
	
	2. On the other hand, assume $u(\cdot)$ is $G$-convex, $y(x)\in \partial^G u(x)$ and $u(x) \ge u_{\emptyset}(x)$ for all $x \in X$. From Corollary \ref{cor:implementable} and Remark \ref{rmk:implementability}, we know $y$ is implementable and there exists a price menu $p: cl(Y) \rightarrow \R$ such that the pair $(y, p(y))$ is incentive compatible, where
	\begin{equation*}
		p(y) =
		\begin{cases}
			H(x,y(x), u(x)) & \text{ if } y = y(x) \in y(X) :=\{ y(x) \in cl(Y) |~ x \in X \};\\
			\bar{z} & \text{ otherwise}.\\
		\end{cases}
	\end{equation*}
	
	Firstly, the mapping $p$ is well-defined, using the same argument as that in Remark \ref{rmk:implementability}.
	
	Secondly, the participation constraint holds since 
	$$G(x,y(x), p(y(x))) = u(x) \ge u_{\emptyset}(x) \text{ for all $x\in X$.}$$ 

	Thirdly, let us show this price menu $p$ is lower semicontinuous. Let $\tilde{p}$ be the restriction of $p$ to $y(X)$. Suppose that $\{y_n\}_{n\in \N} \subset y(X)$ converges $y_0 \in y(X)$ with $y_n = y(x_n)$ and $y_0 = y(x_0)$, satisfying $$\lim\limits_{n \rightarrow \infty} \tilde{p}(y_n) = \liminf\limits_{y \rightarrow y_{0}} \tilde{p}(y).$$  
	Let $z_n:= \tilde{p}(y_n)$ and $z_{\infty}:=\lim\limits_{n \rightarrow \infty} z_n$. To prove lower semicontinuity of $\tilde{p}$, we need to show $\tilde{p}(y_0)\le z_{\infty}$.  Since $y_n \in \partial^G u(x_n)$, we have $$u(x) \ge G(x,y_n, H(x_n, y_n, u(x_n))) = G(x, y_n, z_n).$$ Taking $n\rightarrow \infty$, this implies $u(x)\ge G(x, y_0, z_{\infty})$. Therefore, 
	$$G(x_0, y_0, \tilde{p}(y_0)) = u(x_0) \ge G(x_0, y_0, z_{\infty}).$$ By Assumption \ref{assmp:Gdecreasing}, we know $\tilde{p}(y_0) \le z_{\infty}$. Thus $\tilde{p}$ is lower semicontinuous. Since $p$ is an extension of $\tilde{p}$ from $y(X)$ to $cl(Y)$ as its lower semicontinuous hull, satisfying $v(y)= \bar{z}$ for all $y\in cl(Y)\setminus y(X)$, we know $p$ is also lower semicontinuous.
	\medskip
	
	Lastly, two integrands are equal: $\pi(x, y(x), p(y(x))) = \pi(x,y(x), H(x,y(x), u(x)))$. Therefore, $(P_0) \ge (P)$.
\end{proof}

\vspace{0.3cm}

\begin{proof}[Proof of Proposition \ref{nondecreasing}]
	Let $u$ be any $G$-convex function, and let $\alpha$, $\beta$ be any two agent types in $X$ with $\alpha \ge \beta$. By $G$-convexity of $u$, for this $\beta$, there exist $y\in cl(Y)$ and $z \in cl(Z)$ such that 
	$$u(\beta)=G(\beta, y,z) \text{ and } u(x)\ge G(x, y,z) \text{ for any $x\in X$.}$$ 
	
	Since $\alpha \ge \beta$, by Assumption \ref{assmp:Gcoordinate-monotone}, we have $G(\alpha, y,z)\ge G(\beta,y,z)$. Combining with $u(\alpha)\ge G(\alpha, y,z)$ and $u(\beta) = G(\beta,y,z)$, one has $u(\alpha) \ge u(\beta)$. Thus, $u$ is nondecreasing.
\end{proof}

\vspace{0.3cm}

\begin{proof}[Proof of Proposition \ref{Subdiff/Bdd}]
	(Proof by contradiction).\medskip
	
	By Assumption \ref{assmp:Gcoordinate-monotone} and Assumption \ref{assmp:Gtech3}, for $s=\frac{4R\sqrt{M}}{\delta}$, there exists $r>0$ such that for any $(x, y, z)\in X \times  cl(Y) \times cl(Z)$, whenever $\norm{y}\ge r$, we have $$\sum\limits_{i=1}^{M}D_{x_i}G(x,y,z)\ge \frac{4R\sqrt{M}}{\delta}.$$
	
	Assume the boundedness conclusion of this proposition is not true. Then for this $r$, there exist $ x_0 \in \omega$ and  $ y_0\in \partial^G u(x_0)$ such that $\norm{y_0}\ge r$. Thus,
	\begin{equation}\label{eqn_coercivity}
		\sum\limits_{i=1}^{M}D_{x_i}G(x,y_0,z)\ge \frac{4R\sqrt{M}}{\delta} \ \ \text{ for all } x \in X \text{ and } z \in \R.
	\end{equation}
	Since $y_0 \in \partial^G u(x_0)$, by definition of the $G$-subdifferential, we have 
	$$u(x)\ge G(x,y_0,H(x_0,y_0,u(x_0))) \text{ for any $ x \in X$.}$$ 
	
	Take $x=x_0+\delta x_{-1}$, where $x_{-1}:=(\frac{1}{\sqrt{M}}, \frac{1}{\sqrt{M}}, \cdots, \frac{1}{\sqrt{M}})$ is a unit vector in $\R^M$ with each coordinate equal to $\frac{1}{\sqrt{M}}$. Then 
	\begin{equation}\label{eqn_prop3.6}
		u(x_0+\delta x_{-1})\ge G(x_0+\delta x_{-1},y_0,H(x_0,y_0,u(x_0))).
	\end{equation}
	For any $x \in \omega+ \delta \overline{B(0,1)}$, from conditions in the proposition, we have $|u(x)|\le R$. Therefore, 
	\begin{flalign*}
	2R &\ge |u(x_0+\delta x_{-1})|+|u(x_0)|&&\\
	&\ge |u(x_0+\delta x_{-1})- u(x_0)|&& \text{(By the triangle inequality)}\\
	&\ge u(x_0+\delta x_{-1}) - u(x_0)&& \\
	&\ge G(x_0+\delta x_{-1}, y_0, H(x_0,y_0,u(x_0)))&& \text{(By inequality \eqref{eqn_prop3.6})  and (by definition }\\
	& -G(x_0, y_0, H(x_0, y_0, u(x_0))) && \text{of $H$, $u(x_0) = G(x_0, y_0, H(x_0, y_0, u(x_0)))$}\\
	&= \int_{0}^{1}\delta \left\langle x_{-1},  D_{x}G\left(x_0+t\delta x_{-1}, y_0, H(x_0,y_0,u(x_0))\right)\right\rangle dt&& \text{(By the fundamental theorem of Calculus)}\\
	&= \frac{\delta}{\sqrt{M}}\int_{0}^{1} \sum\limits_{i=1}^{M}D_{x_i}G\left(x_0+t\delta x_{-1}, y_0, H(x_0, y_0, u(x_0))\right) dt&&\\
	&\ge \frac{\delta}{\sqrt{M}}\int_{0}^{1}\frac{4R\sqrt{M}}{\delta}dt&& \text{(By inequality \eqref{eqn_coercivity})}\\
	&= \frac{\delta}{\sqrt{M}}\cdot\frac{4R\sqrt{M}}{\delta}&&\\
	&= 4R,&&
	\end{flalign*}	
	a contradiction. Therefore, there exists $T>0$ such that for any $x \in \omega$ and $y \in \partial^G u(x)$, one has $\norm{y}\le T$. In addition, here $T = T(\omega, \delta, R)$ is independent of $u$. In fact, from the above argument we can see that $T \le r$, which does not depend on $u$.
\end{proof}

\vspace{0.3cm}

\begin{proof}[Proof of Proposition \ref{proposition:convergence}]
	In this proof, we will show that, under Assumptions  \ref{assmp:Gregular}, \ref{assmp:Gdecreasing}, \ref{assmp:Gcoordinate-monotone}, \ref{assmp:Gtech1}, and \ref{assmp:Gtech3}, the sequence of $G$-convex functions converges, by applying results from Lemma \ref{lemma1}; then we will prove that the limit function is also $G$-convex. 
	Assume $\{u_n\}_{n\in \N}$ is a sequence of $G$-convex functions in $X$ such that for every open convex set $\omega \subset \subset X$, the following holds:
	\begin{equation*}
	\sup\limits_{n} \norm{u_n}_{W^{1,1}(\omega )} < +\infty.
	\end{equation*}

	{\bf Step 1:} By Assumption \ref{assmp:Gtech1}, there exists $k>0$ such that for any $(x, x')\in X^2$, $y\in cl(Y)$, and $z\in cl(Z)$, one has $$\norm{D_xG(x,y,z)-D_x G(x',y,z)} \le k\norm{x-x'}.$$ Denote $G_{\lambda}(x,y,z) := G(x,y,z)+\lambda\norm{x}^2$, where $\lambda \ge \frac{1}{2}\Lip(D_xG)$ with $$\Lip(D_xG)
	:=\sup\limits_{\{(x,x',y,z)\in X\times X\times  cl(Y) \times cl(Z):~x \neq x'\}} \frac{\norm{D_xG(x,y,z)-D_x G(x',y,z)}}{\norm{x-x'}}.$$
	
	Then, for any $(x, x')\in X^2$, by Cauchy–Schwarz inequality, one has 
	\begin{flalign*}
		& \left\langle D_xG_{\lambda}(x,y,z)-D_x G_{\lambda}(x',y,z) , x-x'\right\rangle &&\\
		= & ~\left\langle D_xG(x,y,z)-D_x G(x',y,z) , x-x'\right\rangle + 2\lambda \norm{x-x'}^2 && \text{(By definition of $G_{\lambda}(x,y,z)$)}\\
		\ge & ~-\norm{D_xG(x,y,z)-D_x G(x',y,z)} \norm{x-x'}+ 2\lambda \norm{x-x'}^2 &&\text{(By Cauchy–Schwarz inequality)}\\
		\ge & ~\left[2\lambda - \Lip(D_xG)\right]\norm{x-x'}^2 &&\text{(By definition of $\Lip(D_xG)$)}\\
		\ge & ~0.&&
	\end{flalign*} 
	
	Thus, $G_{\lambda}(\cdot, y, z)$ is a convex function on $X$ for any fixed $(y, z) \in cl(Y) \times cl(Z).$\medskip
	
{\bf Step 2:}	Since $u_n$ is $G$-convex, by Lemma \ref{convex-subdiff}, we know $$u_n(x) = \max\limits_{x'\in X, y\in \partial^G u_n(x')} G(x,y,H(x',y,u_n(x'))).$$ 
Define $v_n(x):= u_n(x) +\lambda \norm{x}^2$. Then 
\begin{flalign*}
	v_n(x) =& \max\limits_{x'\in X, y\in \partial^G u_n(x')}G(x,y,H(x',y,u_n(x'))) +\lambda \norm{x}^2 \\
	=& \max\limits_{x'\in X, y\in \partial^G u_n(x')}\left[G(x,y,H(x',y,u_n(x'))) +\lambda \norm{x}^2\right]\\
	=& \max\limits_{x'\in X, y\in \partial^G u_n(x')} G_{\lambda}(x,y,H(x',y,u_n(x'))).	
\end{flalign*}

	Since $G_{\lambda}(\cdot,y,H(x',y,u_n(x')))$ is convex for each $(x', y)$, we have $v_n(x)$, as supremum of convex functions, is also convex for each $n \in \N$.
	\medskip

{\bf Step 3:}	Since $v_n:= u_n +\lambda\norm{x}^2$ and $\sup\limits_{n}\norm{u_n}_{W^{1,1}(\omega)} < +\infty$, one has 
	$$\sup\limits_{n}\norm{v_n}_{W^{1,1}(\omega)} < +\infty \text{ for any $\omega \subset \subset X$.} $$

	Hence $\{v_n\}_{n\in \N}$ satisfies all the assumptions of Lemma \ref{lemma1}. So, by conclusion of Lemma \ref{lemma1}, there exists a convex function $v^*$ in $X$ and a measurable set $A \subset X$ such that $\dim_{H} (X \setminus A)\le M-1$ and up to a subsequence, $\{v_n\}_{n\in \N}$ converges to $v^*$ uniformly on compact subset of $X$ and $\{D v_n\}_{n\in \N}$ converges to $D v^*$ pointwise in A.
	
	Let $u^*(x):=v^*(x)-\lambda\norm{x}^2$, then  $\{u_n\}_{n\in \N}$ converges to $u^*$ uniformly on compact subset of $X$ and $\{D u_n\}_{n\in \N}$ converges to $D u^*$ pointwise in A.
	\medskip
	
{\bf Step 4:}	Finally, let us prove that $u^*$ is $G$-convex.\medskip

	Define $\Gamma(x):=\cap_{i\ge 1}\overline{\cup_{n\ge i}\partial^G u_n(x)}$ for all $x\in X$.
	\medskip
	
\noindent
{\it Claim}. For any $x'\in X$, we have $\Gamma(x') \neq \emptyset$.
\medskip
	
\begin{proof}
	{\bf Step 4.1:} Let us first show for any $\omega \subset\subset X$, one has $$\sup\limits_{n}\norm{u_n}_{L^{\infty}(\bar{\omega})}<+\infty.$$
	
	Suppose not, then there exits a sequence $\{x_n\}_{n\in \N}\subset \bar{\omega}$ such that $\limsup\limits_{n}|u_n(x_n)|=+\infty$.
	
	Since $\bar{\omega}$ is compact, there exists $\bar{x}\in \bar{\omega}$ such that, up to a subsequence, $x_n\rightarrow \bar{x}$ as $n \rightarrow \infty$. Again up to a subsequence, we may assume that $u_n(x_n)\rightarrow +\infty$ as $n \rightarrow \infty$.
	
	Since $\bar{x} \in \bar{\omega} \subset \subset X$, there exists $\delta >0$ such that $\bar{x}+\delta x_{-1} \in X$, where $x_{-1}:=(\frac{1}{\sqrt{M}}, \frac{1}{\sqrt{M}}, \cdots, \frac{1}{\sqrt{M}})$ is a unit vector in $\R^M$ with each coordinate equal to $\frac{1}{\sqrt{M}}$. For any $x>\bar{x} + \delta x_{-1}$, there exists $n_0\in \N$ such that for any $n>n_0$, we have $x>x_n$. By Proposition \ref{nondecreasing}, $u_n$ are nondecreasing, and thus
	\begin{equation}\label{eqn_integral}
	\int_{\{x\in X|~ x>\bar{x} +\delta x_{-1}\}} u_n(x)dx \ge m\left(\left\{x\in X|~ x> \bar{x}+\delta x_{-1}\right\}\right) u_n(x_n)\rightarrow +\infty.
	\end{equation}
	Here $m\left(\left\{x\in X|~ x>\bar{x}+\delta x_{-1}\right\}\right)$ denotes Lebesgue measure of the set $\{x\in X|~ x>\bar{x}+\delta x_{-1}\}$, which is positive.
	
	Denote $\omega' := \left\{x\in X|~ x>\bar{x}+\delta x_{-1}\right\}$. Therefore, we have $$\norm{u_n}_{W^{1,1}(\omega')} \ge \norm{u_n}_{L^{1}(\omega')} \ge \int_{\omega'} u_n(x) dx \rightarrow +\infty.$$ 
	This implies 
	$$\sup\limits_{n} \norm{u_n}_{W^{1,1}(\omega')} = +\infty.$$
	
	On the other hand, since both $X$ and the set $\left\{x\in \R^M|~ x> \bar{x}+\delta x_{-1} \right\}$ are open and convex, we have  $\omega' = X \cap \{x\in \R^M|~ x> \bar{x}+\delta x_{-1} \}$ is also open and convex. Therefore, by assumption, we have 
	$$\sup\limits_{n} \norm{u_n}_{W^{1,1}(\omega')} < +\infty.$$ 
	
	This is a contradiction. Thus for any $\omega \subset \subset X$, we have $\sup\limits_{n}\norm{u_n}_{L^{\infty}(\bar{\omega})}<+\infty$.\medskip
	
	{\bf Step 4.2:} For any fixed $x'\in X$, there exists an open set $\omega \subset \subset X$ and $\delta>0$ such that $x'\in \omega$ and $\omega + \delta \overline{B(0,1)} \subset \subset X$.
	
	From Step 4.1, we know  $\sup\limits_{n}\norm{u_n}_{L^{\infty}\left(\omega + \delta \overline{B(0,1)}\right)} < +\infty$. 
	There exists $R>0$ such that for all $n\in \N$, we have 
	$$|u_n(x)|\le R \text{ for all $x \in \omega + \delta \overline{B(0,1)}$.}$$ 
	
	Since $u_n$ are $G$-convex functions, by Proposition~\ref{Subdiff/Bdd}, there exists $T = T(\omega, \delta, R) >0$, independent of $n$, such that $\norm{y}\le T$ for any $y \in \partial^G u_n(x')$ and any $n\in \N$. Thus, there exists a sequence $\{y_n\}_{n\in \N}$ such that $y_n \in \partial^G u_n(x')$ and $\norm{y_n}\le T$ for all $n\in \N$.\medskip
	
	By compactness theorem for sequence $\{y_n\}_{n\in \N}$,  there exists $y'$ such that, up to a subsequence, $y_n \rightarrow y'$ as $n \rightarrow \infty$. Thus, we have 
	$$y' \in \overline{\cup_{n\ge i}\partial^G u_n(x')} \text{ for all $i\in \N$. }$$
	It implies $$y' \in \cap_{i\ge 1} \overline{\cup_{n\ge i}\partial^G u_n(x')} = \Gamma (x').$$ 
	
	Therefore $\Gamma(x') \neq \emptyset$ for all $x' \in X$.\medskip
\end{proof}	
	 
	Now for any fixed $x\in X$ and any $y\in \Gamma(x)$, by Cantor's diagonal argument, there exists $\{y_{n_l}\}_{l\in \N}$ such that $$y_{n_l} \in \partial^G u_{n_l}(x) \text{ and } \lim\limits_{l\rightarrow \infty} y_{n_l} = y.$$
	For any $l\in \N$, by definition of $G$-subdifferentiability, one has
	$$u_{n_l}(x')\ge G(x', y_{n_l}, H(x, y_{n_l}, u_{n_l}(x))) \text{	for any $x' \in X$.}$$
	Take limit $l \rightarrow \infty$ at both sides, we get 
	$$u^*(x') \ge G(x', y, H(x, y, u^*(x))) \text{	for any $x'\in X$.}$$ 
	Here we use the fact that both functions $G$ and $H$ are continuous by Assumption \ref{assmp:Gregular} and Proposition \ref{lemma_continuity}. Then by definition of $G$-subdifferentiability, the above inequality implies $y\in \partial ^G u^*(x)$. 

	So $\partial^G u^*(x)\neq \emptyset$ for all $x\in X$, which means $u^*$ is G-subdifferentiable everywhere. By Lemma \ref{convex-subdiff}, $u^*$ is $G$-convex.
\end{proof}

\vspace{0.3cm}

\begin{proof}[Proof of the existence theorem]
{\bf Step 1:} For any $G$-convex $u$, define a point-to-set mapping on $X$ as 
	$$\Phi_u: x \longmapsto argmin_{\partial^G u(x)} \left\{-\pi(x, \cdot, H(x,\cdot,u(x)))\right\}.$$ 
	For each $x \in X$, since $u$ is $G$-convex, $\partial^G u(x) \neq \emptyset$. By definition,  $\partial^G u(x)$ is a closed set. Moreover, by Proposition \ref{Subdiff/Bdd} it is compact. Since both $-\pi$ and $H$ are continuous, $-\pi(x, \cdot, H(x,\cdot,u(x)))$ is also continuous and has minimum on $\partial^G u(x)$. That is, $\Phi_u(x) \neq \emptyset$. Again, since $-\pi(x, \cdot, H(x,\cdot,u(x)))$ is continuous and $\partial^G u(x)$ is compact, $\Phi_u(x)$ is also a compact set. \medskip

	By Proposition \ref{Subdiff/Bdd}, for any compact set $\omega \subset X$, $\cup_{x \in \omega} \partial^G u(x)$ is bounded. Besides, it is compact. $\Phi_u(x)$ is nonempty and compact for all $x\in \omega$, and $\cup_{x \in \omega} \left\{(x, y)|~ y \in \Phi_u(x)\right\}$ is a Borel set. By the measurable selection theorem (cf. \cite[Theorem 1.2, Chapter VIII]{EkelandTemam76}), there exists a measurable mapping $y: \omega \rightarrow Y$ such that for almost all $x$, $y(x) \in \Phi_u(x)$. Let $\{\omega_n\}_{n\in \N}$ denote a sequence of compact sets such that $\omega_1 \subset \omega_2 \subset ... \subset \omega_n \subset ...\subset X$ with $\cup_{n}\omega_n = X$. On each $\omega_n$, there exists a measurable selection map $y^n: \omega_n \rightarrow Y$. Define $\bar{y}: X \rightarrow Y$ such that $\bar{y} = y^1$ on $\omega_1$ and $\bar{y} = y^n$ on $\omega_n\setminus \omega_{n-1}$ for $n \ge 2$. Then $\bar{y}$ is a measurable selection of $\Phi_u$, i.e., $\bar{y}$ is measurable and $\bar{y}(x)\in \Phi_u(x)$ for almost every $x$. 
	\medskip

	Let $\{(u_n, y_n)\}_{n\in \N}$ be a maximizing sequence of $(P)$, where maps $u_n: X\rightarrow \R$ and $y_n: X\rightarrow cl(Y)$ for all $n\in \N$. Without loss of generality, we may assume that for all $n\in \N$, $y_n(\cdot)$ is measurable and $y_n(x) \in \Phi_{u_n}(x)$ for each $x\in X$. Starting from $\{(u_n, y_n)\}_{n\in \N}$, we would find a value-product pair $(u^*, y^*)$ satisfying all the constraints in \eqref{Principal_new_problem}, and show that it is actually a maximizer.
	\medskip
	
{\bf Step 2:} From Assumption \ref{assmp:Gtech0}, there exist $\alpha \ge 1$, $a_1, a_2> 0$, and $b\in \R$ such that for each $x\in X$ and $n \in \N$,
\begin{flalign*}
	a_1 \norm{y_n(x)}_{\alpha}^{\alpha} \le & -\pi(x,y_n(x),H(x, y_n(x), u_n(x))) - a_2 u_n(x) +b \\
	\le &  -\pi(x,y_n(x),H(x, y_n(x), u_n(x)))- a_2 u_{\emptyset}(x) + b.
\end{flalign*}
	Here the second inequality comes from $u_n\ge u_{\emptyset}$. Together with Assumption \ref{assmp:Pi1}, this implies $\{y_n\}_{n\in \N}$ is bounded in $L^{\alpha}(X)$.
	\medskip

	By the participation constraint and Assumption \ref{assmp:Gtech0}, we know 
	\begin{equation*}
	u_{\emptyset}(x) \le u_n(x) = G(x,y_n(x),H(x,y_n(x),u_n(x))) \le \frac{1}{a_2}(b - \pi(x,y_n(x),H(x,y_n(x),u_n(x)))).
	\end{equation*}

	Together with Assumption \ref{assmp:Pi1}, we know $\{u_n\}_{n\in \N}$ is bounded in $L^1(X)$.
	\medskip

	By $G$-subdifferentiability, $Du_n(x) = D_x G(x, y_n(x), H(x,y_n(x),u_n(x)))$. By Assumption \ref{assmp:Gtech2}, we have $$\norm{Du_n}_{1}\le c\norm{y_n}_{\beta}^{\beta}+d \le c\left(N+\norm{y_n}_{\alpha}^{\alpha}\right)+d.$$ The last inequality holds because $\beta \in (0, \alpha]$. Because $X$ is bounded and $\{y_n\}_{n\in \N}$ is bounded in $L^{\alpha}(X)$, we know $\{Du_n\}_{n\in \N}$ is bounded in $L^1(X)$.
	\medskip
	
	Since both $\{u_n\}_{n\in \N}$ and $\{Du_n\}_{n\in \N}$ are bounded in $L^1(X)$, one has $\{u_n\}_{n\in \N}$ is bounded in $W^{1,1}(X)$. By Proposition \ref{proposition:convergence}, there exists a $G$-convex function $u^*$ on $X$ such that, up to a subsequence, $\{u_n\}_{n\in \N}$ converges to $u^*$ in $L^1$ and uniformly on compact subset of $X$, and $\{D u_n\}_{n\in \N}$ converges to $D u^*$ almost everywhere.
	\medskip

{\bf Step 3: } Denote $y^*(x)$ as a measurable selection of $\Phi_{u^*}$. Let us show $(u^*,y^*)$ is a maximizer of the principal's program $(P)$. \medskip

	{\bf Step 3.1: }By Assumption \ref{assmp:Gtech0}, for all $x$, $y_n(x)$ and $u_n(x)$, one has
	\begin{flalign*}
	&-\pi(x,y_n(x),H(x,y_n(x), u_n(x)))\\
	\ge & \ a_2 G(x,y_n(x),H(x,y_n(x), u_n(x))) -b \\
	=&\  a_2 u_n(x) - b \\
	\ge&\ a_2 u_{\emptyset}(x) - b.
	\end{flalign*}

	By Assumption \ref{assmp:Pi1}, $u_{\emptyset}$ is measurable, thus one can apply Fatou's lemma and get
	\begin{align}\label{3}
		\begin{split}
			\sup \tilde{\Pi}(u,y) & = \limsup\limits_{n} \tilde{\Pi}(u_n, y_n) \\
			&= -\liminf\limits_{n} \int_{X} - \pi(x, y_n(x), H(x,y_n(x),u_n(x)))  ~dx\\
			& \le - \int_{X} \liminf\limits_{n} - \pi(x, y_n(x), H(x,y_n(x),u_n(x)))~ dx. \\
		\end{split}
	\end{align}
	
	Define 
	$$\gamma(x):=\liminf\limits_{n} - \pi(x, y_n(x), H(x,y_n(x),u_n(x))). $$
	For each $x\in X$, by extracting a subsequence of $\{y_n\}_{n\in \N}$, which is denoted as $\{y_{n^x_l}\}_{l\in \N}$, we assume 
	$$\gamma(x) = \lim\limits_{l \rightarrow \infty} - \pi(x, y_{n_l^x}(x), H(x,y_{n_l^x}(x),u_{n_l^x}(x))).$$ 
	\medskip
	
	{\bf Step 3.2: } 	For any fixed $x \in X$, since $u_{n_l^x}$ are $G$-convex functions and $\{u_{n_l^x}\}_{l\in \N}$ is bounded in $L^1(X)$, by similar arguments as in the Step 4.1 of the proof of Proposition \ref{proposition:convergence}, $\{u_{n_l^x}\}_{l\in \N}$ is also bounded in $L_{loc}^{\infty}(X)$. Then by Proposition \ref{Subdiff/Bdd}, $\{y_{n_l^x}\}_{l\in \N}$ is also bounded in $L_{loc}^{\infty}(X)$. Thus there exists a subsequence of $\{y_{n_l^x}(x)\}_{l\in \N}$, again denoted as $\{y_{n_l^x}(x)\}_{l\in \N}$, that converges. Denote $\tilde{y}$ a mapping on $X$ such that $y_{n_l^x}(x) \rightarrow \tilde{y}(x)$ as $l \rightarrow \infty$ for all $x\in X$.
	\medskip
	
	Since $\pi$ and $H$ are continuous, we have 
	$$ \gamma(x)= - \pi(x, \tilde{y}(x), H(x,\tilde{y}(x),u^*(x))).$$
	
	For each fixed $x\in X$ and any $l \in \N$, since $u_{n_l^x}$ are $G$-convex and $y_{n_l^x}(x) \in \partial^G u_{n_l^x}(x)$, we have 
	$$u_{n_l^x}(x')\ge G(x', y_{n_l^x}(x),H(x,y_{n_l^x}(x),u_{n_l^x}(x))) \text{  for any $x' \in X$}.$$ 
	Take limit $l \rightarrow \infty$ at both sides, we get 
	$$u^*(x')\ge G(x', \tilde{y}(x),H(x,\tilde{y}(x),u^*(x))) \text{ for any $x'\in X$.}$$
	 By definition of $G$-subdifferentiability, we have $\tilde{y}(x)\in \partial^Gu^*(x)$. 
	 \medskip

{\bf Step 3.3: } By definition of $y^*$, one has 
$$ -\pi(x, y^*(x), H(x,y^*(x),u^*(x)))\le   -\pi(x, \tilde{y}(x), H(x,\tilde{y}(x),u^*(x))) = \gamma(x).$$
	
	So, together with \eqref{3}, we know 
\begin{equation}\label{minimizer}
	\sup \tilde{\Pi}(u,y) \le - \int_{X}  \gamma(x) ~dx \le - \int_{X}  - \pi(x, y^*(x), H(x,y^*(x),u^*(x))) ~dx = \tilde{\Pi}(u^*,y^*).
\end{equation}

	Since $\{u_n\}_{n\in \N}$ converges to $u^*$, and $u_n(x)\ge u_{\emptyset}(x)$ for all $n\in \N$ and $x \in X$, we have $u^*(x)\ge u_{\emptyset}(x)$ for all $x \in X$. In addition, because $u^*$ is $G$-convex and $y^*(x) \in \partial^G u^*(x)$, we know $(u^*, y^*)$ satisfies all the constraints in \eqref{Principal_new_problem}. Together with \eqref{minimizer}, we proved $(u^*,y^*)$ is a solution of the principal's program.
\end{proof}

\bigskip

\bigskip


\begin{thebibliography}{BCDE}
	
	\bibitem{Armstrong96} 
	M. Armstrong, 
	Multiproduct nonlinear pricing, 
	{\em Econometrica}, {\bf 64 }(1996) 51–75.

	\bibitem{Balder77} 
	E.J. Balder, 
	An extension of duality-stability relations to non-convex optimization problems, 
	{\em SIAM J. Control Optim.}, {\bf 15} (1977) 329-343.
	
	\bibitem{BaronMyerson82} 
	D.P. Baron, R.B. Myerson, 
	Regulating a monopolist with unknown costs, 
	{\em Econometrica} {\bf 50} (1982) 911–930.
	
	\bibitem{Basov05} 
	S. Basov, 
	{\em Multidimensional screening}, 
	Springer-Verlag, Berlin, 2005.
	
	\bibitem{Carlier01} 
	G. Carlier, 
	A general existence result for the principal–agent problem with adverse selection, 
	{\em J. Math. Econom.} {\bf 35} (2001) 129–150.
	
	\bibitem{CarlierLachand-Robert01} 
	G. Carlier, T. Lachand-Robert, 
	Regularity of solutions for some variational problems subject to convexity constraint, 
	{\em Comm. Pure Appl. Math.} {\bf 54} (2001) 583–594.
	
	\bibitem{DoleckiKurcyusz78} 
	S. Dolecki, S. Kurcyusz, 
	On $\Phi$-convexity in extremal problems, 
	{\em SIAM J. Control Optim.} {\bf 16} (1978)  277-300.


	\bibitem{EkelandTemam76}
	I. Ekeland, R. Temam,
	 {\em Analyse convexe et probl\'emes variationnels},
	 Dunod (Libraire), Paris, 1976.


	\bibitem{ElsterNehse74} 
	K.-H. Elster, R. Nehse, 
	Zur theorie der polarfunktionale, 
	{\em Math. Operationsforsch. Stat.} {\bf 5} (1974) 3-21.


	\bibitem{Evans98}
	L. C. Evans, 
	{\em Partial differential equations},
	American Mathematical Society, Providence, Rhode Island, 1998.
	

	\bibitem{FigalliKimMcCann11} 
	A. Figalli, Y.-H. Kim, R.J. McCann, 
	When is multidimensional screening a convex program? 
	{\em J. Econom. Theory} {\bf 146} (2011) 454-478.
	
	\bibitem{GangboMcCann96} 
	W. Gangbo, R.J. McCann, 
	The geometry of optimal transportation, 
	{\em Acta Math.} {\bf 177} (1996) 113–161.

	\bibitem{KutateladzeRubinov72} 
	S.S. Kutateladze, A.M. Rubinov, 
	Minkowski duality and its applications, 
	{\em Russian Math. Surveys} {\bf 27}  (1972) 137-192.
	
	\bibitem{MartinezLegaz05} 
	J.E. Martínez-Legaz, 
	Generalized convex duality and its economic applications, in: 
	{\em Handbook of generalized convexity and generalized monotonicity}, Springer, New York, 2005, pp. 237--292.	

	\bibitem{MaskinRiley84} 
	E. Maskin, J. Riley, 
	Monopoly with incomplete information, 
	{\em The RAND Journal of Economics} {\bf 15} (1984) 171-196.

	\bibitem{McAfeeMcMillan88} 
	R.P. McAfee, J. McMillan, 
	Multidimensional incentive compatibility and mechanism design, 
	{\em J. Econom. Theory} {\bf 46} (1988) 335–354.
		
	\bibitem{McCannZhang17}
	R.J. McCann, K.S. Zhang,
	On concavity of the monopolist's problem facing consumers with nonlinear price preferences,
	To appear in {\em Comm. Pure and Applied Math.}
		
	\bibitem{Mirrlees71}  
	J.A. Mirrlees, 
	An exploration in the theory of optimum income taxation, 
	{\em Rev. Econom. Stud.} {\bf 38} (1971) 175–208.

	\bibitem{MonteiroPage98}  
	P.K. Monteiro, F.H. Page Jr., 
	Optimal selling mechanisms for multiproduct monopolists: incentive compatibility in the presence of budget constraints, 
	{\em J. Math. Econom.} {\bf 30} (1998) 473–502.


	\bibitem{MussaRosen78} 
	M. Mussa, S. Rosen, 
	Monopoly product and quality, 
	{\em J. Econom. Theory} {\bf 18} (1978) 301–317.

	\bibitem{Myerson81}
	R.B. Myerson, 
	Optimal auction design, 
	{\em  Mathematics of Operations Research} {\bf 6} (1981) 58-73.

	\bibitem{NoldekeSamuelson15p} 
	G. N\"oldeke, L. Samuelson, 
	The implementation duality, 
	{\em Econometrica} {\bf 86(4)} (2018) 1283–1324.

	\bibitem{Rochet87}
	J.-C. Rochet,
	A necessary and sufficient condition for rationalizability in a quasi-linear context,
	{\em J. Math. Econom.} {\bf 16} (1987) 191--200.

	\bibitem{RochetChone98} 
	J.-C. Rochet, P. Chon\'e, 
	Ironing sweeping and multidimensional screening, 
	{\em Econometrica} {\bf 66} (1998) 783–826.
	
	\bibitem{RochetStole03} 
	J.-C. Rochet, L.A. Stole, 
	The economics of multidimensional screening, in: 
	{\em M. Dewatripont, L.P. Hansen, S.J. Turnovsky (Eds.),  Advances in Economics and Econometrics}, Cambridge University Press, Cambridge, 2003, pp. 150-197.
	
	\bibitem{Rubinov00a} 
	A.M. Rubinov, 
	Abstract convexity: Examples and applications, 
	{\em Optimization} {\bf 47} (2000)  1–33. 


	\bibitem{Singer97} 
	I. Singer, 
	{\em Abstract convex analysis}, Wiley-Interscience, New York, 1997.

	\bibitem{Spence74} 
	M. Spence, 
	Competitive and optimal responses to signals: An analysis of efficiency and distribution, 
	{\em J. Econom. Theory} {\bf 7} (1974) 296–332.

	\bibitem{Spence80} 
	M. Spence, 
	Multi-product quantity-dependent prices and profitability constraints, 
	{\em Rev. Econom. Stud.} {\bf 47} (1980) 821–841.
	
	\bibitem{Trudinger14} 
	N. S. Trudinger, 
	On the local theory of prescribed Jacobian equations, 
	{\em Discrete Contin. Dyn. Syst.} {\bf 34} (2014) 1663-1681.

	\bibitem{Wilson93} 
	R. Wilson, 
	{\em Nonlinear pricing}, 
	Oxford University Press, Oxford, 1993.

	\bibitem{Zhang18}
	K.S. Zhang,
	{\em Existence, uniqueness, concavity and geometry of the monopolist’s
	problem facing consumers with nonlinear price preferences}, Ph.D. Thesis, University of Toronto, 2018.
	
	
\end{thebibliography}
\end{document}